\newcommand{\EditSZ}[1]{{\color{blue} #1}}
\newcommand{\CommentSZ}[1]{}
\def\beq{\begin{equation}}
\def\eeq{\end{equation}}
\begin{document}

\title{Model Predictive Control under Timing Constraints induced by Controller Area Networks
\thanks{The research work is supported by NSF grants ECCS-0841195 (CAREER), CNS-0931576, and CMMI-1436284. Authors would like to
thank Prof. Haibo Zeng for the discussion of CAN bus.}
}


\author{ Zhenwu Shi   \and Fumin Zhang
}


\institute{Zhenwu Shi and Fumin Zhang\at
              Georgia Institute of Technology
              \email{zwshi, fumin@gatech.edu}           
}

\date{Received: date / Accepted: date}

\maketitle

\begin{abstract}
\noindent When multiple model predictive controllers  are implemented on a shared  controller area network (CAN),   their performance may degrade due to the variable timing and delays among messages. The priority based real-time scheduling of messages on the CAN introduces complex timing of events, especially when the types and number of messages change at runtime. This paper introduces a novel hybrid timing model to make runtime predictions on the timing of the messages for a finite time window.
Controllers can be designed using the optimization algorithms for model predictive control by considering  the timing as optimization constraints.  This timing model allows multiple controllers to share a CAN without significant degradation in the controller performance. The timing model also provides a convenient way to check the schedulability of messages on the CAN at runtime. Simulation results demonstrate that the timing model is accurate and computationally efficient to meet the needs of real-time implementation. Simulation results also demonstrate that model predictive controllers designed when considering the timing constraints have superior performance than the controllers designed without considering the timing constraints.
\end{abstract}

%
%
%
%
%
%
%
%
%
%
\section{Introduction} \label{section:Introduction}
Modern industrial control applications, such as the automotive control, are characterized by  the use of shared networks to replace excessive wiring.
Deterministic timing is crucial in  time-critical industrial applications, because uncertainty in timing may cause embarrassing, or even life-threatening, sudden decreases in systems performance. Real-time networks have been developed to support networking with deterministic timing, with the control area network, or CAN, being the most mature and accepted one \cite{CAN2.0, Zeng2010}. A CAN connects a number of nodes that are able to send and receive messages. Each message on the CAN  is broadcasted to all nodes, and only one message can be transmitted at any time. To resolve contention among multiple messages, the CAN utilizes a media access control protocol  called carrier sense multiple access with bitwise arbitration (or CSMA/BA). Each message is assigned a unique identifier, which is used as an assigned priority when contention occurs.   Since each identifier is unique, each message has a unique priority. Therefore, when two or more nodes attempt to send messages at the same time, the node with the highest priority message will be granted access to the CAN to transmit, and the other nodes will need to defer their message transmission until the communication link becomes idle, which can be detected after receiving a bit field indicating the end of the message being transmitted.  The length of each CAN message can be determined up to certain accuracy and uncertainties so that the
value well approximate the real timing and there is no randomness in the mechanism to resolve conflicts. Therefore, the timings of message transmission and reception events on CAN can be well predicted.

Using  CAN to support networked control systems increases flexibility. However,  most
networked control system designs are usually  constrained by limited bandwidth of the communication link, which does not allow message transmission at an arbitrarily high
rate.  The CAN based control systems are no exception. When multiple control loops must share access to a common communication link, the bandwidth must be distributed appropriately so that all control loops are stable and all achieve a desired level of performance. Hence, one must design both the controller and the distribution of bandwidth to guarantee stability and  optimal and robust performance \cite{Hespanha2007,Zhang2013}.

Over the last several decades, hardware and software systems supporting the CAN have  improved significantly, resulting in very reliable message transmission and timing accuracy. Therefore, the probability of packet drops, and the possible randomness in timing caused by clock drift, can be  practically ignored for controller design. However,  since the CSMA/BA used by CAN is a contention based  protocol, it  alone cannot provide sufficient control over the distribution of bandwidth among networked controllers.  While the timing is still deterministic, contention may cause large  variations in timing, a phenomenon generally known as jitters  \cite{Baruah1997,Cervin2003}. If not handled well,  jitters may cause controls to be faulty at unexpected  (or  even  life threatening) times. These timing variations cannot be ignored by any controller design. But some jitters happen with small probability, and so are  hard  to diagnose \cite{Cervin2006}.
Since the contention based  media access protocol is not sufficient to avoid timing variations,  a higher level scheduling algorithm is often designed to allocate the bandwidth among control loops. In \cite{CANTabuada}, authors discuss the design of self-triggered controllers that can  reduce the number of required messages for control systems, which can save  communication bandwidth for other applications. Also,  authors of \cite{Marti2010} propose an optimal strategy to allocate communication bandwidth to different control loops implemented on a CAN, and the article \cite{Jeon2001} analyzes the effect of response time on the control performance. However, these methods cannot completely avoid contention.
When unexpected contention occurs, classical real-time scheduling resolves  these contentions by priority based scheduling algorithms \cite{ShaRev04}, such as  the popular rate monotonic scheduling (or RMS) and earliest deadline first (or EDF) algorithms \cite{LiuLayland73}.

Model predictive controllers, or MPCs, were originally  developed for industrial process control \cite{Clarke1987, Lee1994,Richalet1978}. The success led to a new general approach for
controller design that has been used in many other applications, including vehicle and robot control \cite{ Camacho2004,Grune2011,Wang2009}.  The basic idea of
model predictive control is to  use a model of the physical system to predict future system behavior over a finite time horizon, starting from each sampling time where
new sensor measurements are available. The control effort in the finite time horizon is computed by solving an optimization problem. At each sampling time, only the first value of the resulting control is applied to the plant, and then the entire calculation is  repeated at subsequent sampling time instants. Model predictive control offers a natural way to incorporate state and control constraints \cite{Mayne2000} to the design. However, it requires sufficient online computing resources and computing time. Chemical process control, where model predictive control has seen great success over many years, allows for both. While other applications, such as the control of automobiles or robots, are more constrained in terms of  timing and computing resources because of their reliance on (networked) embedded computers \cite{leen2002}, recent advances in embedded processors show considerable promise for applying model predictive control in automotive and robotic applications as well.

Effective MPC designs rely on accurate, high fidelity  models of the control loops. However, the jitters associated with messages on the CAN incur time-varying delays into the control loops. Such time-varying delays make it difficult to derive a reliable model used by MPC. This challenge may be answered by the approach of {\it control-scheduling co-design} where a controller and the timing of control related events can be jointly determined. Two categories of methods exist in the literature: the offline methods and the online methods.

Offline methods perform optimization at an offline design stage \cite{HuLemmon06, ZWS_RTSS08}. Typically, a scheduling algorithm is first determined, and then computer simulation is used to generate a sequence of timings of all possible events under the scheduling algorithm.  Optimization techniques can then be applied to tune the parameters of the scheduling algorithm and the controller design until a predefined performance criteria is optimized \cite{Arzen2000}. Offline methods are feasible. However, they lead to overly conservative designs, and they are not completely compatible with model predictive control that requires control efforts to be computed online for a finite time window using predictions. Online methods for handling jitters involve co-designing a schedule of events and a model predictive control at each sampling time for a finite horizon \cite{Henriksson2002}, which reduce the amount of computation required compared to offline methods since a shorter time window is concerned.  Furthermore, the controllers that are designed in such methods are usually  less conservative than the ones designed with  offline methods, because they only need to compensate for  the worst case delay in a relatively short time window. A key requirement for the online approach is a computationally efficient method to predict the timing of events for the finite horizon used by the model predictive control.  While timing can be computed by simulations for  offline methods, such simulations are too expensive for online methods.  To the best of our knowledge,  the existing works do not offer a general method for accurate timing prediction on real-time networks.  For example, the works \cite{Gaid2006,Liu2013} obtain a timing prediction from a lookup table that is generated offline by computer simulation. In \cite{Zhao2008},  the timing is assumed to be periodic, while \cite{Zhang2005} models the timing as a Markov process, where the transition probabilities are assumed to be known. These methods all have certain  degrees of inaccuracy that must be tolerated by a model predictive control design.  If a message takes longer than expected to transmit, or is perturbed by other messages that were not considered at the design stage so that  its deadline is missed, then the schedule would not adjust for this fault.
The work \cite{Shi2013}  is perhaps the first to introduce a deterministic timing model that connects real-time priority based scheduling algorithms with model predictive control designs. This timing model may be leveraged by model predictive control designs to improve performance, by better compensating for timing variations, which serves as the starting point of the work of this paper.

This paper develops a novel methodology that focuses on handling the timing constraints (e.g. jitters and delays) associated with MPC on CAN. The major contributions are summarized as below:
\begin{itemize}
\item {\bf Model.} We develop a receding horizon timing model for  event-triggered model predictive control on  CAN.  Existing real-time scheduling analysis of the CAN  focuses on modeling time-varying delays as either constant values in worst-case scenarios\cite{Tindell,Tindell1995, CanRevisit} or stochastic variables obeying certain distribution\cite{Zeng2010}. These results do not provide a process model with sufficient accuracy. Moreover, many control systems nowadays are operating in dynamic and uncertain environments. As a result, the system workload will change accordingly. For instance, some messages on the CAN may need to be removed in some cases, while new messages may be added in other cases. This variability of messages inevitably further increases delay variation in feedback control loop.  Our model is able to capture the variations of timing caused by the changes of number of messages, message length, and  priorities at run-time over a finite time window. This is particularly suitable for model predictive control.
\item {\bf MPC design.}  \EditSZ{We propose an effective design for an event-triggered MPC that incorporates both the timing model and the control loop model to find the optimal controlling effort under the timing constraints on CAN.}   Networked model predictive control designs exist for  contention based  protocols over the Ethernet; see, for example,   \cite{Goodwin2004,Imer2006,Liu2007,Loontang2006,Montestruque2004}. However, the Ethernet is very different from the CAN bus, since it does not offer predictable timing. Therefore, these works cannot be applied directly to the model predictive control design problem for the CAN bus. Our MPC design is triggered by the deterministic timing events on the CAN. We have discovered that a state observer is necessary to estimate the states of the timing model. An observer with proved convergence is thus incorporated into the MPC design. The observer and the event triggered MPC controller design have not appeared in previous works.
\item {\bf Simulations.} We perform simulations to demonstrate  that our MPC design can lead to improved MPC performance. The design is compared to MPC designs without the timing model to show the performance improvement.
\end{itemize}
To the best of our knowledge, these contributions do not exist in the literatures reviewed and have not been previously published.

The technical content of the paper is organized as follows.  Section \ref{section:Problem} first review the CAN protocol and its message properties. Then a structure for MPC is introduced, which formulates the co-design problem studied in this paper. This problem  motivates the need for an efficient timing model that is necessary to enable the co-design.  Section \ref{section:TimingModel} then derives the timing model that is needed to solve the co-design problem. The  timing model consists state vectors, selected to represent the status of all messages on the CAN bus, and transition rules that
determine the values of  state vectors over time. Using the timing model, one can check for schedulability of all messages at significant moments. Not all states in the state vectors are directly observable on a CAN bus, Section \ref{section:StateObserver} discusses how to estimate the state vectors in the hybrid timing model from measurements collected in each CAN node. We rigorously prove that the algorithm used for estimation converges to the true values of the state vectors.  Section \ref{section:MPC} presents the solution of MPC design proposed in this paper. The timing model is used to determine controller delays so that the MPC can be determined more accurately than using worst case response times. Section \ref{section:simulation} presents simulations to show the effectiveness of our approach. We demonstrate that the timing model is at least as accurate as other simulation based methods, but significantly reduced computational cost. We also demonstrate that the co-designed MPC with timing model achieves better tolerance to disturbances in timing than using worst-case timing. For ease of reading, we have summarized all major notations used throughout the paper in Table 1.

\begin{table}[hbtp]
\label{allnotation}
\centering
\vspace{-2mm}
\caption{Major Notations in Paper}
\begin{tabular}{cl} \hline \vspace{-2mm}\\
 & \hspace{1.5cm}CAN Bus Messages \vspace{1mm}\\
\hline \vspace{-2mm}\\
$\tau_n$ & message chain consisting of sensor and control messages \vspace{1mm}\\
$\tau_n^1$ & 1st sub-message of $\tau_n$, i.e. the sensor message \vspace{1mm}\\
$\tau_n^2$ & 2nd sub-message of $\tau_n$, i.e. the control message \vspace{1mm}\\
$C_n^1$ & transmission duration of $\tau_n^{1}$ \vspace{1mm}\\
$C_n^2$ & transmission duration of $\tau_n^{2}$\vspace{1mm}\\
$I_n^1$ & time for preparing $\tau_n^{1}$ \vspace{1mm}\\
$I_n^2$ & time for preparing $\tau_n^{2}$ \vspace{1mm}\\
$T_n$ &  sampling interval of $\tau_n$ \vspace{1mm}\\
$P_n$ & priority of $\tau_n$ \vspace{1mm}\\
$\alpha_n$ & sampling instant of $\tau_n$ \vspace{1mm}\\
$\beta_n$ & time instant when $\tau_n^{1}$ finishes transmission \vspace{1mm}\\
$\gamma_n$ & time instant when $\tau_n^{2}$ finishes transmission  \vspace{1mm}\\
$\delta_n$ & time delay between $\gamma_n$ and $\alpha_n$ \vspace{1mm}\\
\hline \vspace{-2mm}\\
 & \hspace{1.5cm}MPC Control Design \vspace{1mm}\\
\hline \vspace{-2mm}\\
$x$ & state variable of a physical plant \vspace{1mm}\\
$y$ & output of a physical plant \vspace{1mm}\\
$u$ & MPC control signal applied on the physical plant \vspace{1mm}\\
$J$ & cost function for MPC design \vspace{1mm}\\
$T_p$ & length of MPC prediction horizton \vspace{1mm}\\
$\lambda$ & reference trajectory that MPC tracks \vspace{1mm}\\
\hline \vspace{-2mm}\\
 & \hspace{1.5cm}Timing Model \vspace{1mm}\\
\hline \vspace{-2mm}\\
$N$ & number of total message chains on the CAN bus \vspace{1mm}\\
$d_n$ & deadline state of a message chain $\tau_n$ \vspace{1mm}\\
$r_n$ & residue state of a message chain $\tau_n$ \vspace{1mm}\\
$o_n$ & delay state of a message chain $\tau_n$ \vspace{1mm}\\
$D$ & deadline state of all message chains, i.e. $D=[d_1,\dotsb, d_N]$ \vspace{1mm}\\
$R$ & residue state of all message chains, $R=[r_1,\dotsb, r_N]$ \vspace{1mm}\\
$O$ & delay state of all message chains, $O=[o_1,\dotsb, o_N]$ \vspace{1mm}\\
$I\hspace{-0.55mm}D$ &  index of the message chain being transmitted on CAN \vspace{1mm}\\
$Z$ &  state vector of the model, i.e. $Z=[D, R, O, I\hspace{-0.55mm}D]$\vspace{1mm}\\
$\mathbb{H}$ &  the timing model \vspace{1mm}\\
\hline
\end{tabular}
\end{table}

\section{Problem Formulation} \label{section:Problem}
Our main goal is to establish an event-triggered model predictive control design approach for real-time networks. An ``event" is defined as a significant moment  that should be accounted for by the controller. For example, each time a sensor message finishes transmission, a model predictive  controller can be  initiated to leverage the new information.  Event-triggered model predictive control fits nicely with  the CAN bus, since the CAN hardware can  generate hardware interrupts when ``end of transmission" events  happen. We propose a timing model so that whenever the model predictive control is triggered by an event, one can predict the timing of future events within a finite time horizon and compute control effort accordingly.  For example, one can predict when a future sensor message will arrive and  when the corresponding control effort will be applied, and then compute the control effort for that future time.

 Without loss of generality, we make the following technical assumptions about message transmission and reception on the CAN bus:
\begin{enumerate}
\item The CAN bus is reliable such that no error occurs in sending and receiving messages.
\item At each node, among all messages that are ready for transmission, the message with the highest priority will be sent first.
\end{enumerate}
These two assumptions are valid in real applications, and have been used in many theoretical works related to CAN \cite{Tindell1995,CanRevisit,CANTabuada}.

\begin{figure}[htp]
\centering
\includegraphics[width=0.7\textwidth]{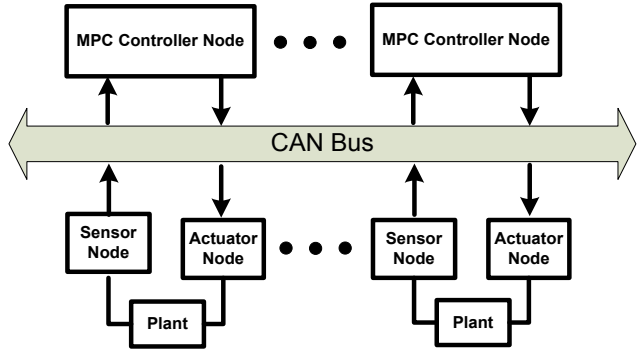}
\caption{Multiple Feedback Control Loops Sharing a CAN} \label{fig:network}
\end{figure}

\subsection{CAN-based Control System} \label{section:Problemsetup}
Consider a set of feedback control loops designed to share a CAN  as illustrated in Figure \ref{fig:network}. Each feedback control loop utilizes the CAN  to send sampled data from sensors to an MPC controller, and to send control commands from the MPC controller to actuators. The sensors, MPC controllers, and actuators are connected to the CAN and are named as sensor nodes, MPC controller nodes, and actuator nodes. We simplify the design so that each feedback control loop has one sensor node, one MPC controller node, and one actuator node. This is not to be considered as only allowing single-input-single-output systems because multiple sensors can be integrated into a sensor node, and multiple actuators can be integrated into an actuator node.  The following rules are imposed by this system:
\begin{itemize}
\item At the sensor node, a user specified software program samples the state of the plants, and then combines sampled data into a single sensor message  for transmission;
\item At the MPC controller node, upon reception of a sensor message, a user specified software program extracts sampled data from the sensor message. The node then computes MPC algorithms, and combines the resulting control commands into a single control message for transmission;
\item At the actuator node, upon reception of a control message, a user specified software program extracts control commands from the control message. The node then issues the control on the actuator;
\item  All control loops are mutually independent, which means that the sensor messages and control messages of  one control loop do not rely on messages from other loops for computation.
\end{itemize}
Therefore, we consider two types of messages related to the control: sensor messages and actuator messages.
The above rules imply a {\bf causality constraint}  between sensor and control messages as follows:
in each feedback control loop, a sensor message must be transmitted {\bf before} the MPC controller starts computing the control law. A control message can only be transmitted {\bf after} the control law is computed.

\subsection{Message Chains}
For causality in each feedback control loop, one requires that the transmission of  a sensor message  be followed by the computation of the control effort, which is then followed by the transmission of a control message.
This process iteratively repeats. Each iteration of this process, beginning from the sampling of sensor and ending at the actuation, is called an {\it instance}, and then the above process for any $n$-th feedback control loop is called a {\it message chain} and denoted by $\tau_n$. Thus, each message chain $\tau_n$ is composed of recurring instances. Let the indices $k=1,2,...$  indicate each of the recurring instances in $\tau_n$ for the $n$-th loop i.e.  the $k$-th instance of $\tau_n$ is  denoted by $\tau_n[k]$.   Figure \ref{fig:NCtask}  illustrates the timing of a message chain when there is no contention.

\begin{figure}[hbtp]
\centering
\includegraphics[width=0.9\textwidth]{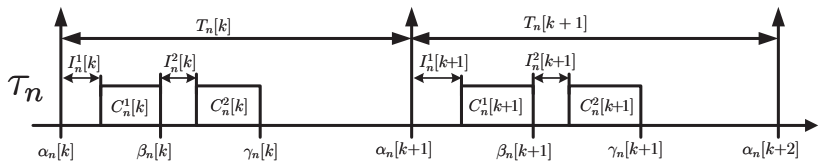}
\caption{An Example Message Chain $\tau_n$  when No Contention Occurs} \label{fig:NCtask}
\end{figure}

The horizontal line in Figure \ref{fig:NCtask}  represents the progression of time. Suppose that $\tau_n[k]$ starts at the $k$-th sampling instant $\alpha_n[k]$. The instance $\tau_n[k]$ contains two sub-messages, namely, $\tau_n^{1}[k]$ and $\tau_n^{2}[k]$, where  $\tau_n^{1}[k]$ represents the sensor message,  and $\tau_n^{2}[k]$ represents the control message. Also,  $I_n^{1}[k]$ is the amount of time for the user specified software program on the sensor node to sample plants and prepare $\tau_n^{1}[k]$; $C_n^1[k]$ is the transmission duration of $\tau_n^{1}[k]$; $\beta_n[k]$ is the time instant when the transmission of $\tau_n^1[k]$ is completed; $I_n^{2}[k]$ is the amount of time for  the user specified software program on the controller node to extract sensor information, compute the model predictive control, and prepare $\tau_n^{2}[k]$. $I_n^{2}[k]$ can be viewed as the worst case execution time
over the finite time horizon that our timing model applies. We assume that the constant value $I_n^{2}[k]$ approximate its true executing time well; $C_n^2[k]$ is the transmission duration of $\tau_n^{2}[k]$;  $\gamma_n[k]$ is the time instant when the transmission of $\tau_n^2[k]$ is completed; and $T_n[k]$ is the sampling interval between $\alpha_n[k]$ and $\alpha_n[k+1]$.  Then $\beta_n[k]$ represents the time when the sensor message finishes transmission, and $\gamma_n[k]$ represents the time when the control message finishes transmission. Note that the potential randomness and variations in $C_n^1[k]$ and $C_n^2[k]$, which are caused by the possible bit-stuffing, can be significantly reduced by effectively encoding the original payload\cite{Cena2012}. Even when the transmission time is NOT completely deterministic, the values of $C_n^1[k]$ and $C_n^2[k]$ will provide a good approximation of the actual transmission time. Here again we want to emphasize that the timing model applies to a finite time horizon only and is updated dynamically as part of the MPC scheme. So the small (unexpected) variations in the values of $C_n^1[k]$ and $C_n^1[k]$ will be tolerated by the control. There may also exist some general-purpose messages that are not related to the  control, but that share the CAN bus with the feedback control loops. These general-purpose messages can also be represented by  message chains. For example, one can let a message chain $\tau_{j}$  represent a general purpose message by choosing $I_{j}^2[k]=0$ and $C_{j}^{\rm 2}[k]=0$.
The following equations are satisfied by the parameters of a message chain when there is {\it no contention}:
\begin{align} \label{nocon}
\beta_n[k] &=\alpha_n(k)+I_n^1[k]+C_n^1[k] \cr
\gamma_n[k]&=\beta_n[k] +I_n^2[k]+C_n^2[k] \cr
\alpha_n[k+1]&= \alpha_n[k] + T_n[k]
\end{align}

The above equations will not hold when there is contention between messages.
Since only one message can be transmitted on the CAN bus at a time, $\tau_n^{1}[k]$ and $\tau_n^2[k]$ in $\tau_n[k]$ may not be transmitted immediately after they are generated. Instead, they have to compete with other messages for access to the CAN bus, under the CSMA/BA arbitration scheme. The priority of $\tau_n[k]$ can be represented by $P_n[k]$. Since each sub-message $\tau_n^1[k]$ and $\tau_n^2[k]$ in $\tau_n[k]$ may have its own priority, we have
\begin{equation}
P_n[k]=\left\{
\begin{array}{ll}
P_n^1[k] & {\rm when}\, \tau_n^{1}[k] {\rm \; is\,  transmitted} \vspace{1mm}\\
P_n^2[k] & {\rm when}\, \tau_n^{2}[k] {\rm \; is\,  transmitted}
\end{array}
\right.
\end{equation}
where $P_n^1[k]$ and $P_n^2[k]$ represent the priorities (identifier fields) of $\tau_n^1[k]$ and $\tau_n^2[k]$, respectively. We will see  in Section \ref{section:TimingModel} that equation (\ref{nocon}) will be replaced by a timing model which is able to answer the challenge.

\subsection{MPC Design}
MPC is an advanced control algorithm with increasing popularity in applications. It iteratively uses a  model of the feedback control loop to predict the future control strategy over a finite time horizon \cite{Rawlings2000}. However, only the first step of the predicted control strategy is implemented. At the next step, the process of predictions are repeated again, yielding a new control strategy. Such prediction horizon keeps shifting forward as time propagate.

For the $n$-th feedback control loop in Figure \ref{fig:network} where $n=1,2,...,N$, we assume that the plant is an independent, multiple input multiple output, and linear time-invariant system
\begin{align} \label{equation:plant}
\dot{x}_n(t)&=Ax_n(t)+Bu_n(t) \cr
y_n(t)&=Cx_n(t),
\end{align}
where $u_n(t)$ is the control command, $y_n(t)$ is the plant output, $x_n(t)$ is the plant state, and $A$, $B$ and $C$ are matrices of proper dimensions.

CAN based MPC relies on the controller nodes to compute the control effort $u_n(t)$ over a finite time window into the future. This finite time window is called the prediction horizon.
When a controller node is triggered by the end of the transmission of a sensor message in the same feedback control loop, the time when the sensor reading is obtained will be used as the start time of the prediction horizon. Denoting this start time by $t_0$, an  estimate  $\hat x_n(t_0)$ of the state is first obtained by a filtering algorithm. Let the finite prediction horizon be $[t_0, t_0+T_p]$, where $T_p$ is the length of the prediction horizon. The goal of the MPC is to find  control commands $u_n(t)$ that brings the predicted plant output $y_n(t)$ as close as possible to a reference trajectory $\lambda_n(t)$ for all $t\in [t_0, t_0+T_p]$.

A controller is triggered by the end of transmission of sensor messages, and an actuator node can only take actions when receiving a control message. Hence,  each model predictive  controller only needs to generate one control command for each sensor message received. The resulting control command is applied to the plant, and remains constant until the next sensor message triggers the controller again.  Time delay exists between the moment when the sensor takes measurements, and the moment when the actuator implements the control command. Therefore,  the control command $u_n(t)$ in   (\ref{equation:plant}) must be a piecewise   constant function
\begin{equation} \label{equation:delay}
u_n(t)\hspace{-0.7mm}=\hspace{-0.7mm}\mu_n[k],\, \; t\in \big[\alpha_n[k]\hspace{-0.7mm}+\hspace{-0.7mm}\delta_n[k], \alpha_n[k+1]\hspace{-0.7mm}+\hspace{-0.7mm}\delta_n[k\hspace{-0.7mm}+\hspace{-0.7mm}1]\big)\; ,
\end{equation}
where $\alpha_n[k]$ is the $k$-th sampling instant of the sensor as shown in Figure \ref{fig:NCtask}, and $\mu_n[k]$ is the optimal control command  that is generated by the model predictive controller in the sampling interval $[\alpha_n[k], \alpha_n[k+1]\,)$.  Also, $\delta_n[k]=\gamma_n[k]-\alpha_n[k]$ is the time delay between  the sampling time instant $\alpha_n[k]$, and the end time $\gamma_n[k]$ of the transmission of the control message, as shown in Figure \ref{fig:NCtask}.
Let $\mathbf{u}_n$ represents the piecewise constant control policy, defined by  $u_n(t)$, where $t \in [t_0, t_0+T_p]$.
If one can perform online prediction of $\delta_n[k]$ for all $k$ that fall within the
prediction horizon, then the piecewise constant control policy $\mathbf{u}_n$ is  a finite dimensional vector $[\mu_n[1], \mu_n[2],...,\mu_n[k],...]$ for
 all $k$ that fall within the finite prediction horizon $[t_0, t_0+T_p]$.

A cost function $J(x_n(t_0), \mathbf{u}_n)$ can be defined for the model predictive controller to optimize. One typical example of the cost function \cite{Liu2007}  is
\begin{equation} \label{equation:J}
\begin{array}{rcl}
&& J(x_n(t_0), \mathbf{u}_n) \\
&=&\int_{t_0}^{t_0+T_p} \big\{(\lambda_n(s)-y_n(s))^T Q_1 (\lambda_n(s)-
y_n(s))+\;u_n(s)^T Q_2 u_n(s)\big\}{\rm d} s \\
&& + x^T(t_0+T_p) Q_3 x (t_0+T_p),
\end{array}
\end{equation}
where $Q_1$, $Q_2$, and $Q_3$ are  positive semidefinite weighting matrices chosen by design. The first term in the integral penalizes the difference between the future plant output and the reference trajectory during the prediction horizon, and the second term   is the control penalty. The last term in the cost function is the terminal cost that ensures the system is stabilized by the controller. In (\ref{equation:J}), $y_n(t)$ must be predicted as a function of $x_n(t)$ and $u_n(t)$ for $t \in[t_0, t_0+T_p]$ through the process model in  (\ref{equation:plant})-(\ref{equation:delay}).

If the delays $\delta_n[k]$ for all tasks, indexed by $k$,  that falls within the interval $[t_0, t_0+T_p]$  can be predicted, then
 the model predictive control design problem can be formulated as a optimization problem that needs to compute at every $k$:
\begin{align} \label{equation:finaloptimization}
{\rm Given}\; x_n(t_0)=\hat x_n(t_0)\;{\rm and}\; \delta_n[k], \;\; {\rm solve}\; \; \min_{\mathbf{u}_n} J(x_n(t_0), \mathbf{u}_n)
\end{align}
subject to the following constraints:
\begin{equation}
\tag{\ref{equation:finaloptimization}.a}
\begin{array}{lc}
u_n(t)\in \mathcal{U},\;\; x_n(t)\in \mathcal{X},\;\;
\end{array}
\end{equation}
\begin{equation}
\tag{\ref{equation:finaloptimization}.b}
\dot{x}_n(t)=Ax_n(t)+Bu_n(t),\; \;
y_n(t)=Cx_n(t),\; \; {\rm and}
\end{equation}
\begin{equation}
\tag{\ref{equation:finaloptimization}.c}
u_n(t)\hspace{-0.7mm}=\hspace{-0.7mm}\mu_n[k],\, t\in \big[\alpha_n[k]\hspace{-0.7mm}+\hspace{-0.7mm}\delta_n[k], \alpha_n[k\hspace{-0.7mm}+\hspace{-0.7mm}1]\hspace{-0.7mm}+\hspace{-0.7mm}\delta_n[k\hspace{-0.7mm}+\hspace{-0.7mm}1]\big)\; ,
\end{equation}
where (\ref{equation:finaloptimization}.a) represents the constraints on the control command and the plant states. The sets $\mathcal{U}$ and $\mathcal{X}$ are assumed to be known.  Equations (\ref{equation:finaloptimization}.b) and (\ref{equation:finaloptimization}.c) represent the physical plant in the process model. The physical plant and the CAN timing model are coupled through the delay $\delta_n[k]$ in  (\ref{equation:finaloptimization}.c).
Note that in the cost function $J(x_n(t_0), \mathbf{u}_n)$, $y_n(t+\tau)$ for $\tau\in[0, T_p]$ must be predicted as a function of $x_n(t)$ and $u_n(t+\tau)$ for $\tau\in[0, T_p]$ through the process model in Equation (\ref{equation:finaloptimization}.b)  and  (\ref{equation:finaloptimization}.c).

If there was no contention on the CAN, the prediction of the time delays  $\delta_n[k]$ would be trivial. In fact, using equation (\ref{nocon}), the delay $\delta_n[k]=\gamma_n[k]-\alpha_n[k]=I_n^1[k]+C_n^1[k]+I_n^2[k]+C_n^2[k]$. In this special case the MPC design problem would be the classical problem which would be relatively easy to solve. We emphasize here that even in this special case, a continuous time MPC may be preferred than the discrete time one since the $\delta_n[k]$ may be time-varying.

\subsection{The Need of a Timing Model} \label{section:ProblemChallenge}
Real-time scheduling of messages under contention introduces time-varying delays $\delta_n[k]$ in Equation (\ref{equation:finaloptimization}.c). Since MPC design relies on the process model in Equation (\ref{equation:finaloptimization}.b) and (\ref{equation:finaloptimization}.c), the accurate prediction of $\delta_n[k]$ is important to MPC performance. Using the worst case delay would result in poor performance. Figure \ref{fig:Delayeffect} shows an example of MPC performance under either accurate or inaccurate prediction of $\delta_n[k]$. The inaccurate prediction of $\delta_n[k]$ is chosen as a constant delay from the worst-case analysis \cite{Tindell,Tindell1995, CanRevisit} and the accurate prediction of $\delta_n[k]$ is the actual time-varying delay of $\delta_n[k]$. The solid line represents the plant output $y_n(t)$ and the dashed line represents the reference trajectory $\gamma_n(t)$. As we can see, using an inaccurate $\delta_n[k]$ would lead to an unreliable process model, which severely degrades the performance of MPC.

\begin{figure}[htp]
\subfigure[Inaccurate prediction]
{\hspace{8mm}\includegraphics[trim = 15mm 0mm 9mm 7mm, clip, height=36mm, width=48mm]{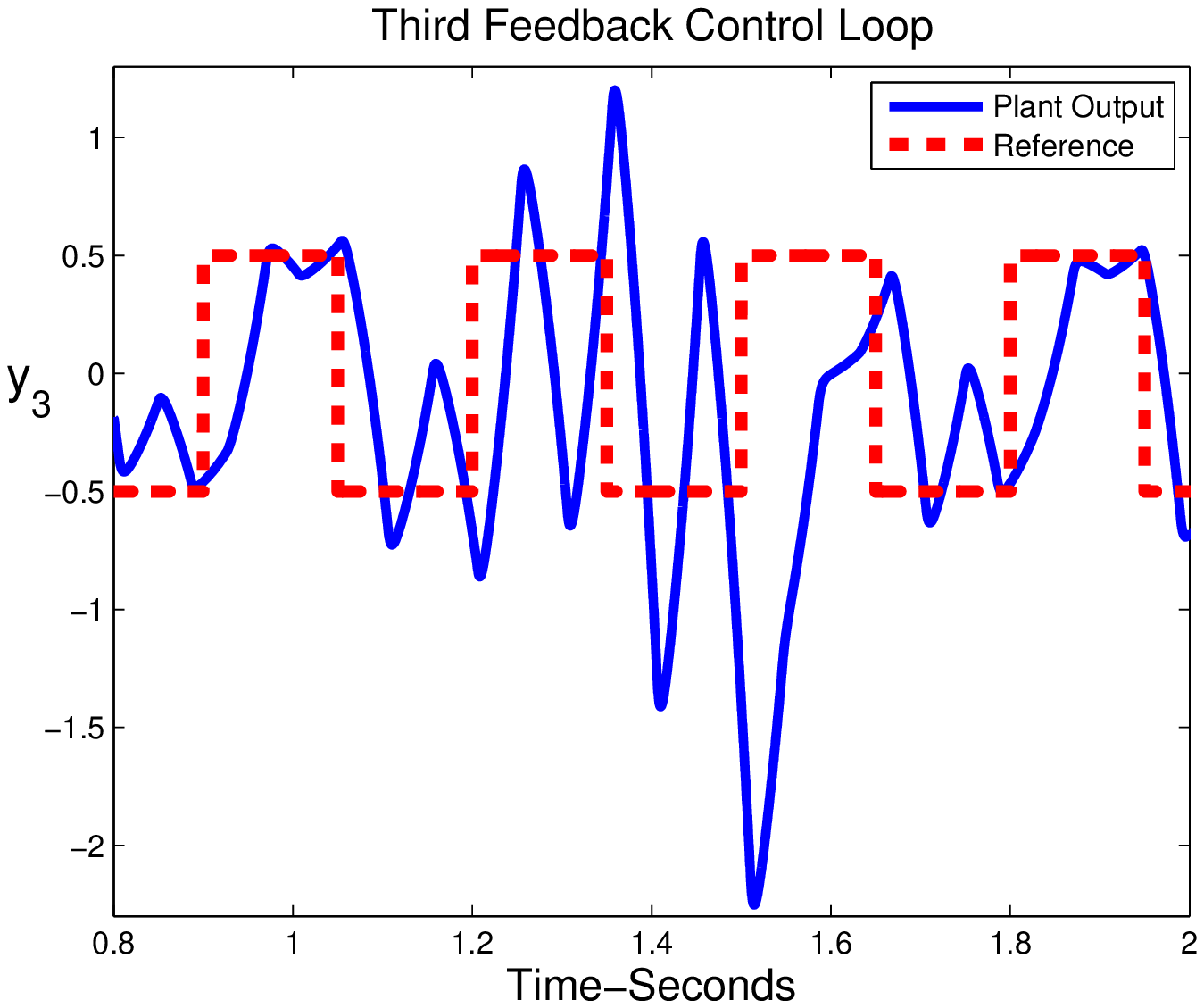}
  \label{fig:firstapproach}
 }
\subfigure[Accurate prediction]
{\hspace{9mm}\includegraphics[trim = 13.3mm 0mm 9mm 7mm, clip, height=36mm, width=48mm]{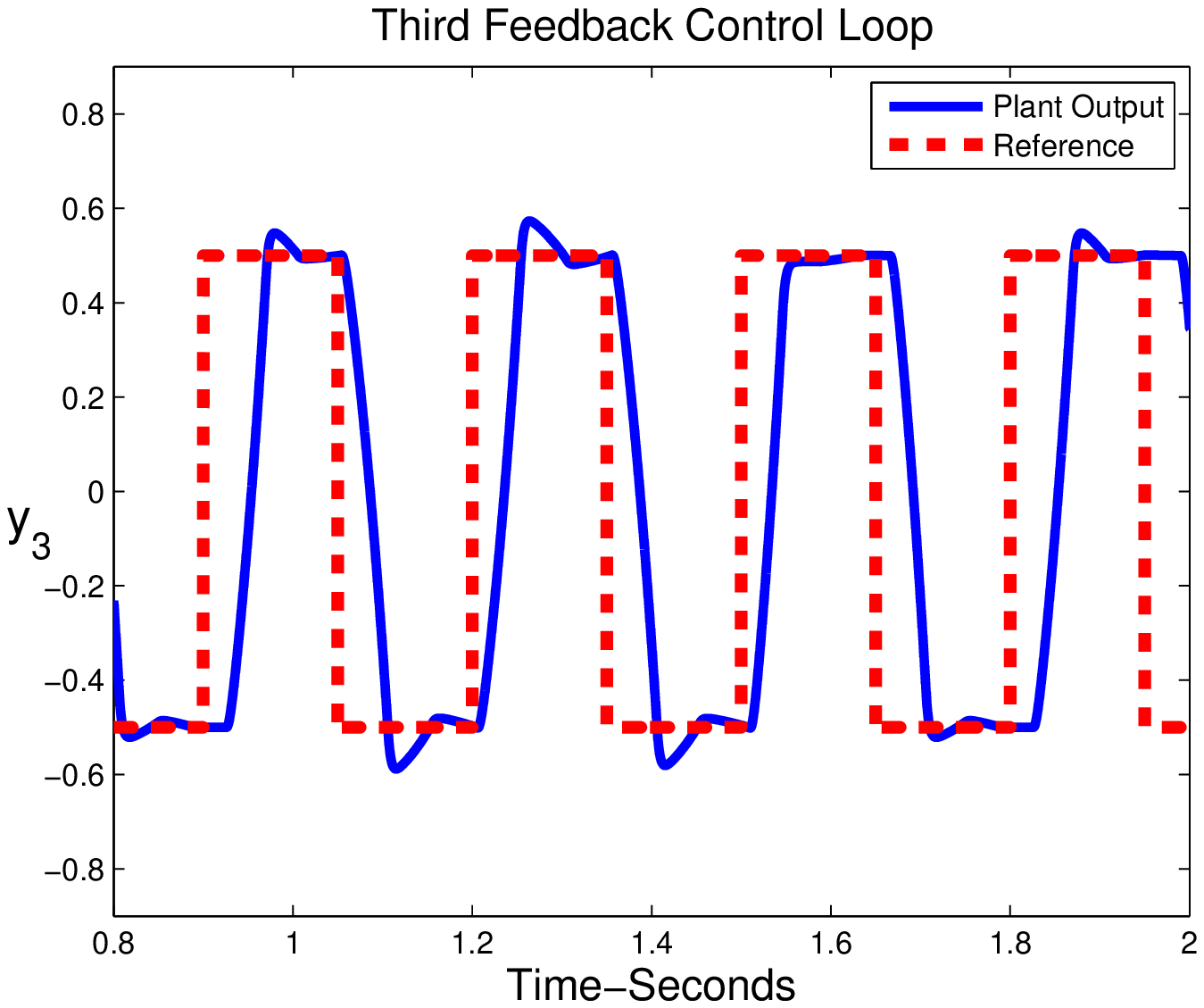}
  \label{fig:secondapproach}
 }
 \caption{MPC performance under two different predictions of $\delta_n[k]$} \label{fig:Delayeffect}
\end{figure}

MPC procedure treats the delay $\delta_n[k]$ as a timing constraint. The accurate prediction of $\delta_n[k]$ for messages under contention and priority based scheduling is difficult. To our knowledge, such model does not  exist  in the previous literature.
To answer this challenge, our contribution is to derive a timing model that is able to predict the timing constraints on the CAN-based control systems.

\section{The Timing Model } \label{section:TimingModel}

Our goal in this section is to derive a timing model for message chains under contentions that are resolved via the assigned priorities. This timing model generates predictions for  $\alpha_n[k]$, $\beta_n[k]$, and $\gamma_n[k]$ for all $n$ and $k$ for a finite length time window into the future, which will replace equation (\ref{nocon}) and then enable the MPC control design in (\ref{equation:finaloptimization}). Using the timing model, all transmission events, including the start and the end of all sensor and control messages can be
inferred.  From these timing information we will be able to estimate the delays $\delta_n[k]$ that are needed when computing the MPC.

Due to the time varying delays under contention, a continuous-time MPC design approach is a natural choice over discrete time MPC. To support the continuous time MPC,  we need to model the scheduled behaviors of message chains as a (piecewise) continuous function of $t$. Therefore, we redefine the message chain characteristics in continuous time domain as follows:
\begin{definition}
For any message chain $\tau_n$, an instance $\tau_n[k]$ is {\it active} at time $t$ if and only if it starts before $t$ and its next instance starts after $t$, i.e. $\alpha_n[k]<t<\alpha_n[k+1]$. At any time $t$, $\tau_n$ has only one active instance denoted as $\tau_n(t)$, i.e.
\begin{equation}
{\rm if }\; \alpha_n[k] \leq t <\alpha_n[k+1]\quad {\rm then\;}
\tau_n(t)\hspace{-0.1mm}=\hspace{-0.1mm}\tau_n[k]
\end{equation}
\end{definition}
\begin{definition}
At any time $t$, we define $\tau_n^1(t)$ and $\tau_n^2(t)$ as the first and second sub-messages in $\tau_n(t)$, i.e.
\begin{equation}
{\rm if }\; \alpha_n[k] \leq t <\alpha_n[k+1],
{\rm then\;} \tau_n^1(t)\hspace{-0.1mm}=\hspace{-0.1mm}\tau_n^1[k] \;\; {\rm and }\;\;\tau_n^2(t)\hspace{-0.1mm}=\hspace{-0.1mm}\tau_n^2[k] \\
\end{equation}
\end{definition}
Based on the above definitions, we can convert the message chain characteristics in Figure \ref{fig:NCtask} into a continuous time description for the active task instance $\tau_n(t)$.  $I_n^1(t)$ and $I_n^2(t)$ are the time needed for preparing $\tau_n^1(t)$ and $\tau_n^2(t)$.  $C_n^1(t)$ and $C_n^2(t)$ are the transmission duration of $\tau_n^1(t)$ and $\tau_n^2(t)$. $T_n(t)$ is the sampling interval of $\tau_n(t)$, and $P_n(t)$ is the priority of $\tau_n(t)$.
 These notations are summarized in Table 2.

\begin{table}[hbtp]
\label{notation}
\centering
\vspace{-2mm}
\caption{Characteristics of a message chain $\tau_n$}
\begin{tabular}{|c|c|} \hline
$\tau_n(t)$ & Active instance of $\tau_n$ at time $t$ \\
$\tau_n^1(t)$, $\tau_n^2(t)$ & 1$^{st}$ and 2$^{nd}$ sub-messages in $\tau_n(t)$ \\
$T_n(t)$ &  Sampling interval of $\tau_n(t)$\\
$P_n(t)$ & Priority of $\tau_n(t)$\\
$C_n^1(t)$, $C_n^2(t)$ & Transmission duration of $\tau_n^{1}(t)$, $\tau_n^{2}(t)$\\
$I_n^1(t)$, $I_n^2(t)$ & Time for preparing $\tau_n^{1}(t)$, $\tau_n^{2}(t)$\\
\hline
\end{tabular}
\end{table}

The parameters listed in Table 2 are not enough to describe the timing of message chains on the CAN due to  contention. The problem of scheduling message chains on a CAN  shares some similarity with the problem of task scheduling on a processor.  Authors of \cite{Zhang2013,Shi2013,Shi2012} introduced a dynamic timing model for the task scheduling problem on a processor. However, scheduling message chains on a CAN is a more complex problem. First, messages on the CAN are not preemptible while tasks considered in \cite{Zhang2013,Shi2013} are preemptible. Moreover, messages on the CAN are subject to causality constraints while tasks in \cite{Zhang2013,Shi2013,Shi2012} are independent. Such increased complexity requires significant extensions to the previous results. We show that the timing model will be a mixed set of continuous-time differential equations and logic equations that describe the evolutions of states that capture the timing.  This model can faithfully describe the timing of events.

\subsection{States of the Message Chains}
To model the preempted behaviors among multiple message chains, we introduce some extra parameters called the {\it states} for each message chain.
\begin{definition} \label{definition:D}
The {\it deadline} $d_n(t)$, for $n=1,2,\dotsb, N$, denotes how long after $t$ the next instance of the $n$th message chain  will start.
\end{definition}

\begin{definition} \label{definition:R}
The {\it residue} $r_n(t)$, for $n=1,2,...,N$, denotes the least remaining time required to finish processing and transmitting the active instance $\tau_n(t)$ {\em after} time $t$.
\end{definition}

\begin{definition} \label{definition:O}
The {\it delay} $o_n(t)$, for $n=1,2,...,N$, denotes the time between the starting time of $\tau_n(t)$ and the current time $t$ if the active instance $\tau_n(t)$ has not been fully processed. If the active instance
$\tau_n(t)$ has been fully processed at a time instant before the current time $t$, then the value of the delay  at time $t$ will be the length of the  time interval between the starting time of  $\tau_n(t)$  and the time instant when $\tau_n(t)$ has been fully processed.
\end{definition}

\begin{definition} \label{definition:ID}
The {\it index} $I\hspace{-0.5mm}D(t) \in\{1,\dotsb,N\}$ is the index of the message chain that is being transmitted on the CAN  at time $t$, where $I\hspace{-0.5mm}D(t)\neq 0$ implies that the active instance $\tau_{\scriptscriptstyle \hspace{-0.5mm}{I\hspace{-0.5mm}D\hspace{-0.2mm}(\hspace{-0.2mm}t\hspace{-0.2mm})}}$ is being transmitted and $I\hspace{-0.5mm}D(t)=0$ implies that no message chain is being transmitted.
\end{definition}

To help readers understand these concepts, let us consider the case of a message chain {\it without contention} as shown in Figure \ref{fig:NCtask}. Suppose the current time $t=\beta_n[k]$. Then
the deadline $d_n(t)=\alpha_n[k+1]-t=\alpha_n[k]+T_n[k]-\beta_n[k]$. The residue $r_n(t)=\gamma_n[k]-t=I_n^2[k]+C_n^2[k]$, and $o_n(t)=t-\alpha_n[k]=I_n^1[k]+C_n^1[k]$. These relationships will be much more complicated under contention.

We can assemble the states of all message chains  at time $t$ into a large row vector
 $Z(t)=[D(t), R(t), O(t), I\hspace{-0.5mm}D(t)] $ where $D(t)=[d_1(t),\dotsb, d_N(t)]$ ,  $R(t)=[r_1(t), ..., r_N(t)]$  and $O(t)=[o_1(t), ..., o_N(t)]$. Our timing model will determine the value of this row vector $Z(t)$ at any time $t$.

\subsection{Stages of a Message Chain} \label{section:Stages}
The residue $r_n(t)$ is a key state that indicates how much time is still needed before the active instance $\tau_n(t)$ will be completely processed. Its value always starts from
$I_n^1(t)+C_n^1(t)+I_n^2(t)+C_n^2(t)$ and decreases to $0$. During this process, the active instance $\tau_n(t)$  sequentially goes through seven different stages from the starting time to completion.
\begin{itemize}
\item  {\bf Stage 1}: the first sub-message $\tau_n^1(t)$ is being prepared. At this stage, $I\hspace{-0.5mm}D(t)\ne n$, the residue of $\tau_n(t)$ satisfies that $C_n^1(t)+I_n^2(t)+C_n^2(t)<r_n(t)\leq I_n^1(t)+C_n^1(t)+I_n^2(t)+C_n^2(t)$.\vspace{1.5mm}
\item {\bf Stage 2}: $\tau_n^1(t)$ is waiting for access to the CAN. At this stage, $I\hspace{-0.5mm}D(t)\ne n$,  the residue stays unchanged: $r_n(t)=C_n^1(t)+I_n^2(t)+C_n^2(t)$.  \vspace{1.5mm}
\item {\bf Stage 3}: $\tau_n^1(t)$ is being transmitted on the CAN. At this stage, $I\hspace{-0.5mm}D(t)=n$, the residue  satisfies that $I_n^2(t)+C_n^2(t)<r_n(t)< C_n^1(t)+I_n^2(t)+C_n^2(t)$. \vspace{1.5mm}
\item {\bf Stage 4}: the second sub-message $\tau_n^2(t)$ is being prepared. At this stage, $I\hspace{-0.5mm}D(t)\ne n$, the residue satisfies that $C_n^2(t)<r_n(t)\leq I_n^2(t)+C_n^2(t)$.\vspace{1.5mm}
\item {\bf Stage 5}: $\tau_n^2(t)$ is waiting for access to the CAN bus. At this stage, $I\hspace{-0.5mm}D(t)\ne n$,  the residue stays unchanged e.g. $r_n(t)=C_n^2(t)$.\vspace{1.5mm}
\item {\bf Stage 6}: $\tau_n^2(t)$ transmitting on the CAN bus. At this stage, $I\hspace{-0.5mm}D(t)= n$, the residue satisfies that $0<r_n(t)< C_n^2(t)$.\vspace{1.5mm}
\item {\bf Stage 7}: $\tau_n^2(t)$ is finished. At this stage, $I\hspace{-0.5mm}D(t)\ne n$, the residue  stays unchanged e.g. $r_n(t)=0$.
\end{itemize}
Whenever a new instance of $\tau_n$ arrives, it will go from Stage 7 back to Stage 1 and repeat the above process. Note that these stages are for one specific message chain. Multiple message chains may stay in different stages
at any given time.

Suppose the active instance of message chain $\tau_n(t)$ is marked by the index $k$.
The dynamic deadline $d_n(t)$ starts from the  initial value $T_n[k]$ and continuously decreases as time propagates, regardless of which stage the message chain is in. Hence we have that
\begin{equation}
\dot d_n(t) = -1,
\end{equation}
with initial value $d_n(\alpha_n[k])=T_n[k]$.
But after the value of $d_n(t)$ decreases to $0$, this indicates that a new instance of the message chain arrives. Then the message chain goes from Stage 7 back to Stage 1, and $d_n(t)$ will jump from $0$ to a new value $T_n[k+1]$.

The residue  $r_n(t)$ starts from the initial value $r_n(\alpha_n[k])= I_n^1[k]+C_n^1[k]+I_n^2[k]+C_n^2[k]$.
In Stages 2, 5, and 7, the residue satisfies $\dot r_n(t)=0$. In Stages 1, 3, 4, and 6, the residue decreases homogeneously e.g. $\dot r_n(t)=-1$.  The value of $r_n(t)$ will jump from $0$ to a new value $ I_n^1[k+1]+C_n^1[k+1]+I_n^2[k+1]+C_n^2[k+1]$ when a new instance of message arrives e.g. the message chain goes from Stage 7 back to Stage 1.

The delay $o_n(t)$ starts  from initial value $0$ at the starting time $\alpha_n[k]$ e.g. $o_n(\alpha_n[k])=0$. Whenever the value of the residue is not 0 e.g. $r_n(t)>0$, the delay
 increases homogeneously as $\dot o_n(t)=1$. In other words, the delay keeps increasing at Stages 1-6. The delay $o_n(t)$ stops increasing at Stage 7 since the active instance of the message chain
 has been fully processed. When a new message instance arrives e.g. the message chain goes from Stage 7 back to Stage 1, the delay $o_n(t)$ is reset to 0.

The index $I\hspace{-0.5mm}D(t)$ keeps constant until a change of access to the CAN happens. Since the CAN only transmits one message at a time, we have the following claim.
\begin{claim} \label{claim:Stages}
Consider a set of message chains $\{\tau_1, \dotsb, \tau_N\}$. At any time $t$, at most one message chain from $\{\tau_1, \dotsb, \tau_N\}$ can stay at Stage 3 or Stage 6, but multiple message chains can stay at other stages at the same time.
\end{claim}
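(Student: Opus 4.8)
The first half of the claim is essentially a restatement of the defining property of the index state $I\hspace{-0.5mm}D(t)$, so the plan is to extract it from Definition~\ref{definition:ID} together with the stage characterizations of Section~\ref{section:Stages}. By those characterizations, $\tau_n(t)$ is in Stage~3 only if $I\hspace{-0.5mm}D(t)=n$, and likewise $\tau_n(t)$ is in Stage~6 only if $I\hspace{-0.5mm}D(t)=n$, whereas in each of Stages 1, 2, 4, 5, 7 one has $I\hspace{-0.5mm}D(t)\ne n$. Hence ``$\tau_n(t)$ is in Stage~3 or Stage~6'' is equivalent to $I\hspace{-0.5mm}D(t)=n$. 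Since $I\hspace{-0.5mm}D(t)$ is, by Definition~\ref{definition:ID}, a single element of $\{0,1,\dotsb,N\}$, there is at most one index $n$ with $I\hspace{-0.5mm}D(t)=n$, and therefore at most one message chain in Stage~3 or Stage~6. At the heart of this argument is the fact that $I\hspace{-0.5mm}D(t)$ is well defined and single-valued, which in turn is guaranteed by the CAN property recalled in Section~\ref{section:Introduction} that only one message can occupy the bus at any instant; I would make this dependence explicit.

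For the second half --- that two or more chains may simultaneously occupy any one of the remaining stages --- the plan is to exhibit a witnessing schedule rather than argue abstractly. Take $N\ge 2$ message chains released at a common sampling instant $\alpha$ with priorities $P_1>P_2>\dotsb>P_N$. Immediately after $\alpha$, while the first sub-messages are being prepared, every $\tau_n$ is in Stage~1 at once; once preparation finishes and $\tau_1^1$ wins arbitration and begins transmitting, $\tau_1$ is in Stage~3 while $\tau_2,\dotsb,\tau_N$ are backlogged and hence all in Stage~2. Running the same construction further, after $\tau_1^1$ completes and $\tau_1$ prepares $\tau_1^2$ one can arrange $\tau_2$ and $\tau_3$ to be preparing their own second sub-messages (Stage~4) or to be waiting to transmit them (Stage~5), and letting the schedule proceed until several chains have transmitted both sub-messages but have not yet been re-released places those chains jointly in Stage~7. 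In each case the only constraints the stage definition imposes on $\tau_n(t)$ are $I\hspace{-0.5mm}D(t)\ne n$ together with a residue inequality, and these are mutually compatible across distinct indices, so no conflict with the first half of the claim arises.

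The proof is short and I do not anticipate a serious obstacle; the one point requiring care is the treatment of the measure-zero instants at which access to the CAN changes hands (the endpoints $\beta_n[k]$, $\gamma_n[k]$, and the arbitration instants). There one must fix a convention --- consistent with the way $I\hspace{-0.5mm}D(t)$ is updated in the timing model of the subsequent subsections, e.g.\ taking $I\hspace{-0.5mm}D(\cdot)$ to be right-continuous so that it carries the value of the newly granted message --- and verify that under that convention $I\hspace{-0.5mm}D(t)$ remains a single value and the Stage~3 and Stage~6 residue inequalities are strict, so that a chain is never counted in two stages simultaneously. With that convention in place the argument above goes through verbatim.
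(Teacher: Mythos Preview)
Your proposal is correct and follows the same line as the paper. The paper does not give a formal proof of Claim~\ref{claim:Stages}; it simply precedes the claim with ``Since the CAN only transmits one message at a time'' and follows it with the remark that the chain in Stage~3 or Stage~6 is the one indicated by $I\hspace{-0.5mm}D(t)$, which is exactly the content you extract from Definition~\ref{definition:ID} and the stage characterizations. Your explicit witnessing constructions for the second half and your discussion of endpoint conventions go beyond anything the paper supplies, but they are consistent with the paper's setup and do no harm.
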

The massage chain that is in Stage 3 or Stage 6 at the time $t$ will be the message indicated by the value of $I\hspace{-0.5mm}D(t)$. On a real CAN implementation, this value is known to all message chains due to the broadcasting mechanism used by the CAN. When one of the message chains is at Stage 3 or Stage 6, all other message chains will remain at Stages 1, 2, 4, 5, or 7. \CommentSZ{But} Since our goal is to derive a model for the CAN, we need to determine and predict the value of the state variable $I\hspace{-0.5mm}D(t)$ from the priorities $P_n(t)$. Let us suppose that a change of access to the CAN happens at a time instant $t_0$. From the values of the residue, we know which stage each message chain is at. Then the message that has access to the CAN will be the message with the highest priority among all messages that are either at stage 2 or stage 4 at time $t_0$. Therefore
\begin{equation}
I\hspace{-0.5mm}D(t_0)= \underset{\{i\vert \tau_i\, {\rm in\, Stages\, 2\, or\, 4,\, at}\, t_0\}}{\rm argmin}P_i(t_0).
\end{equation}
We enforce the convention that if the set $\{i\vert \tau_i\, {\rm in\, Stages\, 2\, or\, 4,\, at}\, t_0\}$ is empty, then $I\hspace{-0.5mm}D(t_0)=0$.
Note that this equation does not hold for all $t$ since the messages are non-preemptive. Therefore, to complete the timing model, we need to pinpoint the time instants when a change of access to the CAN happens.

As we see the evolution of  $Z(t)$ is relatively straightforward within each stage. What remains to do is to discover the length of each stage for each message chain. The length of Stages 1, 3, 4, and 6 are known due to the homogeneous decreasing of the residue $r_n(t)$. But the length of Stages 2, 5, and 7 can not be directly determined from the residue because it relies on knowing which message chain holds the access to the CAN.

\subsection{Significant Moments}
Let the current time be $t$, suppose the vector $Z(t)$ is completely known.  We need to predict the value of $Z(t+s)$ at a future time instant $t+s$.  We know that the values of $Z(t+s)$ will evolve continuously within each stage. However, since the message chain that has access to the CAN will change, and new instance of messages will arrive, the values of $Z(t+s)$ will not evolve continuously  in between different stages, but will rather have jumps.
The moments when these jumps happen are of  more significant value than other time instants.

\begin{definition}
At time $t$, we define the next {\it significant moment} as the time instant $t+S(t)$ where the state vector $Z(t)=\{D(t), R(t), O(t), I\hspace{-0.5mm}D(t)\}$ evolve continuously within the time interval $[\hspace{0.2mm}t, t+S(t))$, but sees a jump in one of the components of $Z(t)$ at time instant $t+S(t)$.
\end{definition}
The state vector $\{D(t), R(t), O(t), I\hspace{-0.5mm}D(t)\}$ evolves continuously most of time except in two situations: (1) a new message accesses the CAN and starts transmission, i.e. the message chain transits from Stage 2 to Stage 3 or from Stage 5 to Stage 6; and (2) a new instance of a message chain arrives, i.e. the message chain transits from Stage 7 to Stage 1. In the first situation, $I\hspace{-0.5mm}D(t)$ will have a jump; and in the second situation, components of the vector $\{D(t), R(t), O(t)\}$ will have a jump. At the current time $t$, the value of $S(t)$ is the time-interval between $t$ and the first time instant when a jump happens.

\subsubsection{A new message chain gaining access to CAN}
At the current time $t$, we want to know how long after $t$ a new message will gain access to the CAN.
Depends on whether the CAN is busy or idle at the current time $t$, we will have four different cases. To simplify the notation, we use $\tau_{I\hspace{-0.5mm}D}$ to denote $\tau_{I\hspace{-0.5mm}D(t)}$ in the following part of this paper, unless otherwise specified.

First, suppose that the CAN bus is busy at time $t$, i.e. $I\hspace{-0.5mm}D(t)\neq 0$, which implies that $\tau_{\scriptscriptstyle \hspace{-0.5mm}{I\hspace{-0.5mm}D}}$ is currently being transmitted on the CAN. As discussed in Section \ref{section:Stages}, we know that $\tau_{\scriptscriptstyle \hspace{-0.5mm}{I\hspace{-0.5mm}D}}$ at current time $t$ falls into either Stage 3 or Stage 6.
\vspace{1mm}

\noindent{\bf Case 1}: $\tau_{\scriptscriptstyle \hspace{-0.5mm}{I\hspace{-0.5mm}D}}$ at Stage 3 when the residue $r_{\scriptscriptstyle \hspace{-0.3mm} I\hspace{-0.5mm}D}(t)$ satisfies the following condition
\begin{equation}
I^2_{\scriptscriptstyle I\hspace{-0.5mm}D}(t)\hspace{-0.5mm}+\hspace{-0.5mm}C^2_{\scriptscriptstyle I\hspace{-0.5mm}D}(t)< r_{\scriptscriptstyle \hspace{-0.3mm} I\hspace{-0.5mm}D}(t)<C^1_{\scriptscriptstyle I\hspace{-0.5mm}D}(t)+I^2_{\scriptscriptstyle I\hspace{-0.5mm}D}(t)+C^2_{\scriptscriptstyle I\hspace{-0.5mm}D}(t)
\end{equation}
In this case, $\tau_{\scriptscriptstyle \hspace{-0.5mm}{I\hspace{-0.5mm}D}}$  will stay within Stage 3 before $\tau^1_{\scriptscriptstyle \hspace{-0.5mm}{I\hspace{-0.5mm}D}}$ finishing transmission. Hence  the next significant moment will happen no later than the moment when the transmission finishes. Therefore, $S(t)\le r_{\scriptscriptstyle \hspace{-0.5mm}{I\hspace{-0.5mm}D}}(t)\hspace{-0.9mm}-\hspace{-0.9mm}[I^2_{\scriptscriptstyle \hspace{-0.5mm}{I\hspace{-0.5mm}D}}(t)\hspace{-0.9mm}+\hspace{-0.9mm}C^2_{\scriptscriptstyle \hspace{-0.5mm}{I\hspace{-0.5mm}D}}(t)]$.

\vspace{2mm}
\noindent{\bf Case 2}: $\tau_{\scriptscriptstyle \hspace{-0.5mm}{I\hspace{-0.5mm}D}}$ at Stage 6, i.e. the residue $r_{\scriptscriptstyle \hspace{-0.3mm} I\hspace{-0.5mm}D}(t)$ satisfies the following condition
\begin{equation}
0< r_{\hspace{-0.3mm}\scriptscriptstyle I\hspace{-0.5mm}D}(t)< C^2_{\hspace{-0.3mm}\scriptscriptstyle I\hspace{-0.5mm}D}(t)
\end{equation}
In this case, $\tau_{\scriptscriptstyle \hspace{-0.5mm}{I\hspace{-0.5mm}D}}$  will stay within Stage 6 before $\tau^2_{\scriptscriptstyle \hspace{-0.5mm}{I\hspace{-0.5mm}D}}$ finishing transmission. No
other message will gain access to the CAN  before $\tau^2_{\scriptscriptstyle \hspace{-0.5mm}{I\hspace{-0.5mm}D}}$ finishing transmission. Then $S(t)\le r_{\scriptscriptstyle I\hspace{-0.5mm}D}(t)$.

\vspace{2mm}
\noindent Based on the above two cases, let us define $S_1(t)$ as the following
\begin{equation}\label{equation:remain_transmission}
S_1(t)= r_{\scriptscriptstyle \hspace{-0.5mm}{I\hspace{-0.5mm}D}}(t)\hspace{-0.8mm}-\hspace{-0.8mm}\left[I^2_{\scriptscriptstyle \hspace{-0.5mm}{I\hspace{-0.5mm}D}}(t)\hspace{-0.5mm}+\hspace{-0.5mm}C^2_{\scriptscriptstyle \hspace{-0.5mm}{I\hspace{-0.5mm}D}}(t)\right]\hspace{-0.5mm}{\rm sgn}({\rm max}\{0, r_{\scriptscriptstyle \hspace{-0.3mm} I\hspace{-0.5mm}D}(t)\hspace{-0.5mm}-\hspace{-0.5mm}C^2_{\scriptscriptstyle \hspace{-0.3mm} I\hspace{-0.5mm}D}(t)\}).
\end{equation}
Since the CAN can only transmit one message and the transmission is non-preemptive, it has to wait at least $S_1(t)$ amount of time before a new message can access to the CAN.
Then the next significant moment for  $\tau_{\scriptscriptstyle \hspace{-0.5mm}{I\hspace{-0.5mm}D}}(t)$ will be at $t+S(t)$ where $S(t) \le S_1(t)$.

Next, we suppose that the CAN is idle at time $t$, i.e. $I\hspace{-0.5mm}D(t)=0$, which implies no message is currently being transmitted on the CAN. In other words, all message chains are preparing sub-messages at current time $t$.  In this case, any message chain $\tau_n$ from $\{\tau_1, \dotsb,\tau_N\}$ falls into either Stage 1, 2, 4, or  5.  But if there is a message  at stage 2 or stage 5 and there is no other messages has access to the CAN, then this message will immediately gain access to the CAN right at the time $t$ and transits to Stage 3 or Stage 6 and $I\hspace{-0.5mm}D(t)\ne 0$. In these cases $S(t)=0$. Therefore, we only need to consider the cases where all message chains are either at Stage 1 or Stage 4.  Let us consider a message chain indexed by $n$.

\vspace{2mm}
\noindent{\bf Case 3}: $\tau_n$ is at Stage 1,  i.e. the residue $r_n(t)$ satisfies the following condition
\begin{equation}
C^1_n(t)+I^2_n(t)\hspace{-0.5mm}+\hspace{-0.5mm}C^2_n(t)< r_n(t)\leq I^1_n(t)+C^1_n(t)+I^2_n(t)+C^2_n(t)
\end{equation}
$\tau_n$ will stay within Stage 1 before $\tau^1_n$ finishing its preparation. The next significant moment will happen at least before  $\tau^1_n$ finishing its preparation.
Hence the value of $S(t)$ will be  no bigger than the remaining preparation time of $\tau^1_n$ e.g. $S(t)\le r_n(t)\hspace{-0.2mm}-\hspace{-0.2mm}C^1_n(t)\hspace{-0.2mm}-\hspace{-0.2mm}I^2_n(t)\hspace{-0.2mm}-\hspace{-0.2mm}C^2_n(t)$.

\vspace{2mm}
\noindent{\bf Case 4}: $\tau_n$ is at Stage 4, i.e. the residue $r_n(t)$ satisfies the following condition
\begin{equation}
C^2_n(t)< r_n(t)\leq I^2_n(t)+C^2_n(t)
\end{equation}
In this case, $\tau_n$ will stay within Stage 4 before $\tau^2_n$ finishing preparation. The next significant moment will happen at least before  $\tau^2_n$ finishing its preparation. Hence the value of
$S(t)$ will be no bigger than the remaining preparation time of $\tau^2_n$ e.g.  $S(t) \le r_n(t)-C_n^2(t)$.

\vspace{2mm}
\noindent  Based on the above two cases, we know that the next significant moment will happen at $t+S(t)$ where $S(t)$ should be at most equal to  the remaining preparation time for any message chain $\tau_n$
\begin{equation}
S(t)\le r_n(t)\hspace{-0.5mm}\hspace{-0.5mm}-\hspace{-0.5mm}C_n^2(t)\hspace{-0.5mm}-\hspace{-0.5mm}\left[C_n^1(t)\hspace{-0.5mm}+\hspace{-0.5mm}I_n^2(t)\right]{\rm sgn}({\rm max}\{0, r_n(t)\hspace{-0.6mm}-\hspace{-0.6mm}I_n^2(t)\hspace{-0.6mm}-\hspace{-0.6mm}C_n^2(t)\})
\end{equation}
This argument holds for all tasks in stages 1, 2, 4, or 5.
Define $S_2(t)$ as
\begin{equation} \label{equation:remain_preparation}
\begin{array}{c}
S_2(t)= \underset{1\leq n\leq N}{\rm min} \left\{r_n(t)\hspace{-0.5mm}\hspace{-0.5mm}-\hspace{-0.5mm}C_n^2(t)\hspace{-0.5mm}-\hspace{-0.5mm}\left[C_n^1(t)\hspace{-0.5mm}+\hspace{-0.5mm}I_n^2(t)\right]{\rm sgn}({\rm max}\{0, r_n(t)\hspace{-0.6mm}-\hspace{-0.6mm}I_n^2(t)\hspace{-0.6mm}-\hspace{-0.6mm}C_n^2(t)\})\right\}
\end{array}
\end{equation}
Therefore, the next significant moment will happen at $t+S(t)$ where $S(t)\le S_2(t)$.

At the significant moments $t+S(t)$ in the four cases above,  if the  either equation (\ref{equation:remain_transmission}) or equation (\ref{equation:remain_preparation}) holds e.g. $S(t)=S_1(t)$ for $I\hspace{-0.5mm}D(t)\neq 0$ or $S(t)=S_2(t)$ for $I\hspace{-0.5mm}D(t)=0$, then the values of $I\hspace{-0.5mm}D(t+S(t))$ will see a jump as
\begin{equation} \label{eq:jump2}
I\hspace{-0.5mm}D(t+S(t))= \underset{\{i\vert \tau_i\, {\rm in\, Stages\, 2\, or\, 4,\, at}\, t+S(t)\}}{\rm argmin}P_i(t+S(t))
\end{equation}
If the set $\{i\vert \tau_i\, {\rm at \, \, Stages\, 2\, or\, 4\, at}\, t+S(t)\}$ is empty, then $I\hspace{-0.5mm}D(t+S(t))=0$.
The values of $\{D(t), R(t), O(t)\}$ will remain unchanged.

\subsubsection{A new instance of message chain arrives}
The states of a message chain will  jump discretely whenever a new instance of a message arrives. For any message chain $\tau_n$, a new instance of $\tau_n$ will arrive at $t+d_n(t)$. Therefore, the earliest next instance of message chains in $\{\tau_1, \dotsb, \tau_N\}$ will not arrive until $t+\underset{1\leq n\leq N}{\rm min}\{d_n(t)\}$. Define $S_3(t)$ as
\begin{equation} \label{equation:time_interval_new}
S_3(t)= \underset{1\leq n\leq N}{\rm min}\{d_n(t)\}.
\end{equation}
Then $S(t) \le S_3(t)$.

Let $n^*$ be the index of the message chain that has the earliest instance that is arriving after $t$. If  $S(t)=d_{n^*}(t)$. Then
\begin{align} \label{eq:jump}
d_{n^*}(t+S(t))& =T_{n^*}(t+S(t)) \cr
r_{n^*}(t+S(t))& =I^1_{n^*}(t+S(t))+C^1_{n^*}(t+S(t)) +I^2_{n^*}(t+S(t))+C^2_{n^*}(t+S(t))  \cr
o_{n^*}(t+S(t)) &=0.
\end{align}
All the other components in $\{D(t+S(t)), R(t+S(t)), O(t+S(t))\}$ do not jump.
Since there is no change of access to the CAN,the state variable $I\hspace{-0.5mm}D(t+S(t))$ does not jump either.

\subsection{The Timing Model} \label{section:hybridmodel}
Let $S(t)={\min}\{S_1(t), S_2(t), S_3(t)\}$.
Our timing model integrates  both the continuous time evolution of the state vector $Z(t)$ within $[\hspace{0.2mm}t, t+S(t))$, and the  discrete jumps at $t+S(t)$. Hence the evolution of the state vector  within any large time interval $[t_a, t_b]$ can be obtained by concatenating the evolution within individual continuous time interval that belongs to $[t_a, t_b]$.

\begin{theorem} \label{claim:H}
At any time instant $t$, given initial values of the state vector $Z(t)=[D(t), R(t), O(t), I\hspace{-0.5mm}D(t)]$ and the parameters of the message chains $\{T_n(t+s'), I_n^1(t+s'), C_n^1(t+s'), I_n^2(t+s'), C_n^2(t+s'), P_n(t+s')\}_{n=1}^{N}$ for all $0\le s' \le s$, there exists a unique vector $[D(t+s), R(t+s), O(t+s), I\hspace{-0.5mm}D(t+s)]$.
\end{theorem}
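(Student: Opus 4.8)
The plan is to prove this by induction on the ordered sequence of significant moments generated by the model of Section~\ref{section:hybridmodel}. Set $t_0 = t$ and, inductively, $t_{j+1} = t_j + S(t_j)$ with $S(t_j) = \min\{S_1(t_j), S_2(t_j), S_3(t_j)\}$. I would establish four facts: (a) the $t_j$ are well defined and non-decreasing; (b) on each half-open interval $[t_j, t_{j+1})$ the vector $Z(\cdot)$ is uniquely determined by $Z(t_j)$ together with the supplied parameters; (c) $Z(t_{j+1})$ is uniquely determined from the left limit $Z(t_{j+1}^-)$ by the jump rules; and (d) only finitely many $t_j$ lie in $[t, t+s]$. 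Granting (a)--(d), the value $Z(t+s)$ is obtained by concatenating the pieces, and its uniqueness is immediate since every step of the construction is a single-valued operation.

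For (b): by the definition of the significant moment and the case analysis of Section~\ref{section:Stages}, no message chain changes stage and no new instance arrives strictly inside $[t_j, t_{j+1})$, and $I\hspace{-0.5mm}D(\cdot)$ is constant there. Hence within this interval each scalar component of $Z$ solves an initial-value problem with a constant right-hand side fixed by the current stage: $\dot d_n = -1$; $\dot r_n = 0$ in Stages~2, 5, 7 and $\dot r_n = -1$ in Stages~1, 3, 4, 6; $\dot o_n = 1$ while $r_n > 0$ and $\dot o_n = 0$ in Stage~7; and $\dot{I\hspace{-0.5mm}D} = 0$. Each such affine ODE has a unique solution from the initial datum $Z(t_j)$. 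Crucially, all chain parameters $\{T_n, I_n^1, C_n^1, I_n^2, C_n^2, P_n\}$ on $[t_j, t_{j+1}]$ are given as hypotheses, so the right-hand sides are fully specified with no circular dependence to resolve.

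For (c): the left limit $Z(t_{j+1}^-)$ exists by (b). A jump is triggered by one of two events, possibly simultaneously. If $S(t_j) = S_3(t_j) = d_{n^*}(t_j)$ (a new instance of $\tau_{n^*}$ arrives, or several instances if the minimizing deadline is attained by a tie), then $d_{n^*}, r_{n^*}, o_{n^*}$ are reset by~(\ref{eq:jump}) to explicit functions of the given parameters, all other components unchanged. If instead $S(t_j) = S_1(t_j)$ (when $I\hspace{-0.5mm}D \neq 0$) or $S(t_j) = S_2(t_j)$ (when $I\hspace{-0.5mm}D = 0$), a change of CAN access occurs and $I\hspace{-0.5mm}D$ is reset by~(\ref{eq:jump2}) to the ${\rm argmin}$ of $P_i$ over the chains in Stages~2 or 4 at $t_{j+1}$, with the convention $I\hspace{-0.5mm}D = 0$ if that set is empty; this ${\rm argmin}$ is single-valued because CAN identifiers are unique, so the priorities $P_i(t_{j+1})$ are pairwise distinct. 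When both events coincide I would apply the instance reset first --- it can change which chains sit in Stages~2/4 --- and then evaluate the ${\rm argmin}$; the composition is again single-valued. In every case $Z(t_{j+1})$ is uniquely determined.

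The hard part will be (d): ruling out a Zeno accumulation of significant moments inside $[t, t+s]$, which is exactly what guarantees the piecewise construction actually reaches $t+s$. Instance arrivals are spaced at least $\min_n T_n > 0$ apart within each chain, so at most $N\bigl(1 + s/\min_n T_n\bigr)$ of them occur. For CAN-access changes, the key observation is that a change setting $I\hspace{-0.5mm}D$ to a nonzero index $m$ places $\tau_m$ at the beginning of Stage~3 (resp.\ Stage~6), where it must remain --- keeping $I\hspace{-0.5mm}D$ unchanged --- for a further $C_m^1$ (resp.\ $C_m^2$) units of time; assuming $C_n^1 > 0$ for every chain (each transmits at least a sensor message) and treating the degenerate general-purpose chains with $C_n^2 = 0$ separately (their Stage~6 is vacuous), consecutive access changes are therefore at least $\min_n C_n^1 > 0$ apart. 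In particular the only way $S(t_j) = 0$ can arise --- an idle bus immediately granted to a waiting message --- is followed by a positive-length Stage~3/6, so zero-length steps cannot chain. Hence finitely many significant moments lie in $[t, t+s]$, the construction covers the whole interval, and both existence and uniqueness of $Z(t+s)$ follow.
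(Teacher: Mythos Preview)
Your proposal is correct and takes essentially the same constructive approach as the paper --- evolve $Z$ by the stage-dependent affine dynamics on each interval between significant moments, then apply the single-valued jump rules at the next significant moment. You are in fact more careful than the paper's own proof, which handles your points (b) and (c) but omits point (d) (the non-Zeno argument) entirely; your treatment of that gap via $\min_n T_n$ and $\min_n C_n^1$ is the natural one and makes the concatenation argument rigorous.
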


\begin{proof}
Based on our previous discussion, we will just construct the unique solution $Z(t+s)$ at any $s>0$.
We first show that a unique trajectory is generated from the continuous evolution of the state vector $\{D(t), R(t), O(t), I\hspace{-0.5mm}D(t)\}$ from $t$ to any time $t+s$ where $t+s \in [\,t, t+S(t)\,)$.

For any message chain indexed $n$,  Since $d_n(t)$ will continuously decrease as time propagate, we have that
\begin{equation} \label{e1}
d_n(t+s) = d_n(t)-s
\end{equation}
Next, we consider the residue $r_n(t)$. If the message chain $n$ is at Stages 1, 3, 4, or 6\ then
\begin{equation} \label{e2}
r_n(t+s)=r_n(t)-s.
\end{equation}
If the message chain $n$ is at Stages 2, 5, or 7. Then
\begin{equation} \label{e3}
r_n(t+s)=r_n(t).
\end{equation}
Next, we consider the delay $o_n(t)$. If $\tau_n$ has been processed before $t$, i.e. $r_n(t)=0$, the delay $o_n(t)$ will not increase after $t$. On the other hand, if $\tau_n$ has not finished before $t$, i.e. $r_n(t)>0$, the delay $o_n(t)$ will continuously increase between $t$ and $t+s$. Thus, we have that
\begin{equation}
o_n(t+s) = o_n(t)+{\rm sgn}(r_n(t))\, s.
\end{equation}
Finally, we consider the index $I\hspace{-0.5mm}D(t)$. It will keep at constant between $t$ and $t+s$ since there is no significant moment, i.e.
\begin{equation}
I\hspace{-0.5mm}D(t+s)=I\hspace{-0.5mm}D(t).
\end{equation}
We see that all the values in the state vector $Z(t+s)$ are uniquely determined.

We now show that at a significant moment, the states jump to unique values. The possible values for $S(t)$ have been given in equations (\ref{equation:remain_transmission}), (\ref{equation:remain_preparation}) and (\ref{equation:time_interval_new}) as $S_1(t)$, $S_2(t)$ and $S_3(t)$. The possible jumps in the states are given by equations (\ref{eq:jump2}) and (\ref{eq:jump}). In all cases the states jump to unique values.
\qed
\end{proof}

Due to the theorem,  we can represent the hybrid timing model of the CAN based system as
\begin{align} \label{equation:hybridmodel}
Z(t+s)= \mathbb{H}\left(Z(t),  \{T_n,I_n^1, C_n^1, I_n^2, C_n^2,  P_n\}_{n=1}^N (t+s')\right)
\end{align}
 where the symbol $\mathbb{H}(\cdot)$ represents the timing model and  $\{T_n,I_n^1, C_n^1, I_n^2, C_n^2,  P_n\}_{n=1}^N (t+s')$ represents the parameters of all
 tasks at any time $t+s'$ for all $0\le s' \le s$.

One immediate benefit of this timing model is a necessary and sufficient condition for schedulability of all messages in a finite time window.
\begin{definition}
A message chain $\tau_n$ is {\it instantaneously schedulable} on the CAN at time $t$ if $r_n(t)\le d_n(t)$.
\end{definition}
If $\tau_n$ is instantaneously schedulable for all time $t$, then all the deadlines of $\tau_n$ are met, then message $\tau_n$ is schedulable in the usual definition.
On the other hand, if the message chain $\tau_n$ is schedulable, then all the deadlines of $\tau_n$ are met, which implies that the message chain is instantaneously schedulable
for all $t$.

\begin{corollary}
A message chain $\tau_n$ is  instantaneously schedulable on the CAN at time $t$ if   $r_n((t+S(t))^-)\le d_n((t+S(t))^-)$.
\end{corollary}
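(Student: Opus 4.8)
The plan is to observe that on the interval from the present time up to the next significant moment, the ``slack'' $d_n-r_n$ of the message chain is a non-increasing function of time, so that its value just before the jump bounds its value at the present time from below. Since instantaneous schedulability at a time instant is precisely the non-negativity of this slack, checking it at the significant moment suffices to certify it at the current time. All the evolution facts I need are exactly those established in Theorem~\ref{claim:H}.

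Concretely, I would fix the current time $t$, write $s^{*}=S(t)$ so that $t+s^{*}$ is the next significant moment, and work on the half-open interval $[\,t,t+s^{*})$. By the definition of a significant moment the state vector $Z(\cdot)$ evolves continuously there, so in particular $d_{n}(\cdot)$ and $r_{n}(\cdot)$ are continuous on $[\,t,t+s^{*})$ and possess left limits at $t+s^{*}$, which are the quantities $d_{n}((t+S(t))^{-})$ and $r_{n}((t+S(t))^{-})$ appearing in the hypothesis. I then set $g_{n}(s):=d_{n}(t+s)-r_{n}(t+s)$ for $s\in[0,s^{*})$. From the continuous-evolution rules used in the proof of Theorem~\ref{claim:H}, the deadline always decreases at unit rate, $d_{n}(t+s)=d_{n}(t)-s$ by \eqref{e1}, while the residue either decreases at unit rate (Stages $1,3,4,6$; see \eqref{e2}) or remains constant (Stages $2,5,7$; see \eqref{e3}), and it never increases on $[\,t,t+s^{*})$, because the only event that raises the residue is the arrival of a new instance, which is itself a significant moment and therefore cannot occur strictly before $t+s^{*}$. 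Hence the right derivative of $g_{n}$ is $-1-\dot r_{n}(t+s)\in\{-1,0\}\le 0$, so $g_{n}$ is non-increasing on $[0,s^{*})$, which gives $g_{n}(0)\ \ge\ \lim_{s\uparrow s^{*}}g_{n}(s)\ =\ d_{n}((t+S(t))^{-})-r_{n}((t+S(t))^{-})$. The hypothesis $r_{n}((t+S(t))^{-})\le d_{n}((t+S(t))^{-})$ makes the right-hand side non-negative, so $g_{n}(0)=d_{n}(t)-r_{n}(t)\ge 0$, i.e.\ $r_{n}(t)\le d_{n}(t)$, which is exactly instantaneous schedulability of $\tau_{n}$ at $t$.

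The one place that needs care — and the closest thing to an obstacle — is the justification that $r_{n}$ is non-increasing across the stage transitions that may occur strictly inside $[\,t,t+s^{*})$ without being counted as significant moments, namely Stage $1\!\to\!2$ and Stage $4\!\to\!5$ when the CAN is busy. Here one simply notes that at such a transition $r_{n}$ is continuous (preparation finishing only changes the rate of the residue, not its value) and its rate merely drops from $-1$ to $0$, so it never becomes positive; consequently the piecewise-linear function $g_{n}$ remains non-increasing throughout, and the slack argument carries over unchanged. (If one wishes, the degenerate case $S(t)=0$, where the interval is empty, can be excluded at the outset since it corresponds to a message instantaneously acquiring the bus and the statement is then vacuous.)
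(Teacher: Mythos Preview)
Your proposal is correct and follows essentially the same approach as the paper: both arguments show that the slack $d_n-r_n$ is non-increasing on $[\,t,t+S(t))$, using the fact that $d_n$ decreases at unit rate (equation~\eqref{e1}) while $r_n$ decreases at rate at most $1$ (equations~\eqref{e2} and~\eqref{e3}). The paper writes the inequality $r_n((t+S(t))^-)\ge r_n(t)-S(t)$ directly and substitutes, whereas you phrase it via the monotonicity of $g_n$; the added remarks about non-significant stage transitions and the degenerate case $S(t)=0$ are careful but not strictly needed.
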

\begin{proof}
Using the dynamic timing model which contains equations (\ref{e1}), (\ref{e2}) and ({\ref{e3}}), we must have
\begin{align}
	d_n((t+S(t))^-) - r_n((t+S(t))^-)&= d_n(t)-S(t) -  r_n((t+S(t))^-) \cr
	& \le  d_n(t)-S(t) -  (r_n(t)-S(t)) \cr
	&= d_n(t) - r_n(t).
\end{align}
Hence if $r_n((t+S(t))^-)\le d_n((t+S(t))^-)$, then $r_n(t)\le d_n(t)$.
\end{proof}

\begin{corollary}
A message chain is schedulable if and only if it is instantaneously schedulable at the significant moments.
\end{corollary}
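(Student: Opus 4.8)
The plan is to prove the two implications separately, and the real work is recognizing that the preceding Corollary already does the heavy lifting. For the ``only if'' direction, if $\tau_n$ is schedulable then, by the remark following the definition of instantaneous schedulability, $r_n(t)\le d_n(t)$ holds at \emph{every} time $t$; passing to left limits, it holds in particular at (the left limit of) every significant moment, which is what is claimed.

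For the ``if'' direction I would argue pointwise. Fix an arbitrary time $t$ and let $t_{i+1}=t+S(t)$ be the next significant moment after $t$; by definition of $S(t)$ the state vector evolves continuously on $[t,t_{i+1})$, so equations (\ref{e1})--(\ref{e3}) apply there. The preceding Corollary then says precisely that $r_n(t_{i+1}^-)\le d_n(t_{i+1}^-)$ implies $r_n(t)\le d_n(t)$. Since $t_{i+1}$ is a significant moment, the hypothesis supplies $r_n(t_{i+1}^-)\le d_n(t_{i+1}^-)$, so $\tau_n$ is instantaneously schedulable at $t$; as $t$ was arbitrary, $\tau_n$ is instantaneously schedulable for all $t$, i.e. schedulable. (Equivalently, and this is the mechanism behind the preceding Corollary: on $[t,t_{i+1})$ one has $\dot d_n=-1$ and $\dot r_n\in\{-1,0\}$, so $d_n-r_n$ is non-increasing there and its infimum over the interval equals the left limit $d_n(t_{i+1}^-)-r_n(t_{i+1}^-)\ge 0$.)

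The subtleties I expect to spend the most care on are two. First, the convention in ``instantaneously schedulable at the significant moments'' must be read as testing the \emph{left-limit} values $r_n(t_i^-)\le d_n(t_i^-)$, consistently with the $(\,\cdot\,)^-$ in the preceding Corollary; this is the correct reading because a jump of $\tau_n$ at a significant moment $t_i$ can only be the arrival of a fresh instance, which happens exactly when $d_n(t_i^-)=0$, so the left-limit test $r_n(t_i^-)\le 0$ certifies that the expiring instance met its deadline, whereas testing only the post-jump values (where $d_n$ has already reset to $T_n$) would overlook a miss in which a previous instance of $\tau_n$ is still unfinished when its successor is released. Second, one must know that every time instant genuinely has a ``next significant moment'': the significant moments form a strictly increasing sequence with no finite accumulation point --- a cascade of $S(t)=0$ transitions is bounded by the total number of stages across the $N$ chains --- and they are cofinal in time because a new instance of each $\tau_n$, hence a new significant moment, is released within every interval of length $\max_k T_n[k]$. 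Granting these bookkeeping facts, the equivalence follows.
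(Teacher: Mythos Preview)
Your proposal is correct and follows essentially the same route as the paper: the ``only if'' direction is immediate from the definition, and the ``if'' direction is obtained by invoking the preceding Corollary to propagate instantaneous schedulability from the next significant moment back to an arbitrary time $t$. The paper's own proof is extremely terse (barely more than a restatement), so your added care about reading the test at left limits and about the existence of a next significant moment is a genuine improvement in rigor rather than a different approach.
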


\begin{proof}
Consider the time instants right before
the significant moments $t+S(t)$. If the message is instantaneously schedulable at these moments, then the message chain is instantaneously schedulable at any time $t$. The entire message change is schedulable. If a message chain is not instantaneously
schedulable at the significant moments, then the message chain is not schedulable. \qed
\end{proof}

\section{State Observer} \label{section:StateObserver}
At each embedded controller node,  the hybrid timing model will be used to predict delays and timing constraints for MPC. The prediction requires the knowledge of the state vector $Z(t)=[D(t), R(t), O(t), I\hspace{-0.5mm}D(t)]$. Since the CAN uses a broadcast scheme, each embedded controller node will know which message is currently being transmitted on the CAN, i.e. the value of $I\hspace{-0.5mm}D(t)$ can be determined. However, the values of the states $[D(t), R(t), O(t)]$ may not be measured directly. In this section, we will discuss how to estimate the state vector $[D(t), R(t), O(t)]$ based on events that can be observed on the CAN.

\subsection{Estimation of $[D(t), R(t), O(t)]$}
As discussed in \cite{DiNatale2012}, CAN chips can generate an interrupt whenever a message is received by a node. These interrupts can be pre-handled by a dedicated MCU that usually shipped together with CAN chip. Therefore, we can easily design an interrupt handler on the host processor of a CAN node to observe the receiving times of $\tau_n^1[k]$ and $\tau_n^2[k]$, which corresponds to $\beta_n[k]$ and $\gamma_n[k]$ as shown in Figure \ref{fig:NCtask}.  Note that the CAN  utilizes a broadcast scheme for message transmission. The MPC controller node in each feedback loop can not only receive messages within its own control loop, but also messages from other feedback control loops. Therefore, each MPC controller node has complete information of $\{\beta_n[k], \gamma_n[k]\}$ for all message chains $\{\tau_1,\dotsb,\tau_N\}$ on the CAN. But there is no direct way to measure $\alpha_n[k]$.

Based on the above observations, we propose an algorithm to estimate the value of $\alpha_n[k]$ as follows
\begin{equation} \label{equation:alphaestimation}
\begin{array}{c}
\hat{\alpha}_n[k]=
{\rm min}\{ \hat{\alpha}_n[k\hspace{-0.9mm}-\hspace{-0.9mm}1]\hspace{-0.9mm} + \hspace{-0.9mm}T_n[k\hspace{-0.9mm}-\hspace{-0.9mm}1],\,\beta_n[k]\hspace{-0.9mm}-\hspace{-0.9mm}C_n^1[k]\hspace{-0.9mm}-\hspace{-0.9mm}I_n^1[k]\}
\end{array}
\end{equation}
where $\hat{\alpha}_n[k-1]$ is the estimate from the previous observations of $\beta_n[k-1]$ and $\gamma_n[k-1]$. Each controller node can estimate $\hat{\alpha}_n[k]$ for all message chains. The computation of $\hat{\alpha}_n[k]$ for $1\leq n\leq N$ at each node is linear with respect to the number of control loops.

At the current time $t$, given $\{\hat{\alpha}_n[k], \beta_n[k], \gamma_n[k]\}$, each embedded controller node can estimate the state vector $\{\hat{d}_n(t), \hat{o}_n(t), \hat{r}_n(t)\}$.
The deadline $d_n(t)$ is estimated as
\begin{equation} \label{e5}
\hat{d}_n(t)=\hat{\alpha}_n[k]+T_n[k]-t,
\end{equation}
where $\hat{\alpha}_n[k]+T_n[k]$ is the time instant when $\tau_n[k+1]$ starts.  The delay $o_n(t)$ is estimated as
\begin{equation} \label{equation:o_estimation}
\hat{o}_n(t)=
\left\{
\begin{array}{cl}
t-\hat{\alpha}_n[k] & \qquad  \; \mbox{ if }\tau_n^2[k] \;{\rm \;NOT\;received}\vspace{0.5mm}\\
\gamma_n[k]-\hat{\alpha}_n[k] & \qquad  \; \mbox{ if }\tau_n^2[k] \;{\rm \;received}\vspace{0.5mm}\\
\end{array}
\right.,
\end{equation}
 The delay will not increase if $\tau_n^2[k]$ has finished transmission before $t$. The residue $r_n(t)$ can be estimated as
\begin{equation}
\label{equation:r_estimation}
\begin{array}{c}
\hat{r}_n(t)=\\
\left\{
\begin{array}{ll}
\{I_n^1\hspace{-1mm}+\hspace{-1mm}C_n^1\hspace{-1mm}+\hspace{-1mm}I_n^2\hspace{-1mm}+\hspace{-1mm}C_n^2\}[k]\hspace{-1mm}-\hspace{-1mm}{\rm min}\{\,t\hspace{-1mm}-\hspace{-1mm}\hat{\alpha}_n[k], I_n^1[k]\,\},  & \tau_n^1[k]\;{\rm and}\; \tau_n^2[k] \;{\rm NOT\;received}\vspace{1.5mm}\\
I_n^2[k]\hspace{-1mm}+\hspace{-1mm}C_n^2[k]\hspace{-1mm}-\hspace{-1mm}{\rm min}\{t\hspace{-1mm}-\hspace{-1mm}\beta_n[k], I_n^2[k]\},  & \tau_n^1[k]\;{\rm received}, \tau_n^2[k]\;{\rm NOT\;received} \vspace{1mm}\\
0,  &  \tau_n^1[k]\;{\rm and}\; \tau_n^2[k] \;{\rm received}
\end{array}
\right.
\end{array}
\end{equation}
where $\{I_n^1\hspace{-1mm}+\hspace{-1mm}C_n^1\hspace{-1mm}+\hspace{-1mm}I_n^2\hspace{-1mm}+\hspace{-1mm}C_n^2\}[k]$ is the shorthand notation for $I_n^1[k]\hspace{-1mm}+\hspace{-1mm}C_n^1[k]\hspace{-1mm}+\hspace{-1mm}I_n^2[k]\hspace{-1mm}+\hspace{-1mm}C_n^2[k]$.

Whenever a message is received by the controller node, an interrupt function can be triggered to estimate $[\hat d_n(t), \hat r_n(t), \hat o_n(t)]$ for $n=1,2,...,N$ at  the moment of reception. Then the state vector $[\hat{D}(t), \hat{R}(t), \hat{O}(t)]$ will be constructed. The timing model $\mathbb{H}$ can then be used to predict the state vectors in future times starting from $t$.

\subsection{Convergence of Estimation}
We show that the estimation $[\hat{D}(t), \hat{R}(t), \hat{O}(t)]$ will have bounded error. The error will not increase as time $t$ propagates.

As we discussed in Equation (\ref{e5}), (\ref{equation:o_estimation}), and (\ref{equation:r_estimation}), the estimates $[\hat{D}(t), \hat{R}(t), \hat{O}(t)]$ are derived from $\{\hat{\alpha}_n[k], \beta_n[k], \gamma_n[k]\}_{n=1}^{N}$. Since $\{\beta_n[k], \gamma_n[k]\}_{n=1}^{N}$ can be directly observed from the CAN, the accuracy of estimating $[\hat{D}(t), \hat{R}(t), \hat{O}(t)]$ is actually determined by the accuracy of estimating $\hat{\alpha}_n[k]$.
Define the estimation error between $\hat{\alpha}_n[k]$ and $\alpha_n[k]$ as
\begin{equation} \label{equation:epsilon}
\epsilon_n[k]=\hat{\alpha}_n[k]-\alpha_n[k]  \;{\rm for \; any\;} k\ge0
\end{equation}

\begin{claim} \label{claim:epsilon_converge}
The estimation error $\epsilon_n[k]$ is non-negative and non-increasing as $k$ grows, i.e.
\begin{equation}
\epsilon_n[0]\ge\epsilon_n[1]\ge \dotsb \ge \epsilon_n[k]\ge\epsilon_n[k+1]\ge\dotsb\ge0 .
\end{equation}
\end{claim}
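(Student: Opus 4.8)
The plan is to prove both halves of the statement by a short induction on $k$, leaning on the $\min$ structure of the estimator (\ref{equation:alphaestimation}) together with two elementary facts about any message chain: (i) sampling is strictly periodic, so $\alpha_n[k]=\alpha_n[k-1]+T_n[k-1]$ holds irrespective of contention; and (ii) contention can only postpone a transmission, so $\tau_n^1[k]$ cannot finish before it has been prepared and sent back-to-back with no waiting, giving $\beta_n[k]\ge\alpha_n[k]+I_n^1[k]+C_n^1[k]$, equivalently $\beta_n[k]-C_n^1[k]-I_n^1[k]\ge\alpha_n[k]$, with equality exactly in the no-contention case (\ref{nocon}).

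The non-increasing property needs nothing beyond (i) and the estimator itself. Since $\hat\alpha_n[k]=\min\{\hat\alpha_n[k-1]+T_n[k-1],\ \beta_n[k]-C_n^1[k]-I_n^1[k]\}\le\hat\alpha_n[k-1]+T_n[k-1]$, subtracting $\alpha_n[k]=\alpha_n[k-1]+T_n[k-1]$ from both sides gives $\epsilon_n[k]=\hat\alpha_n[k]-\alpha_n[k]\le\hat\alpha_n[k-1]-\alpha_n[k-1]=\epsilon_n[k-1]$, so $\{\epsilon_n[k]\}$ is monotonically non-increasing by construction.

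For non-negativity I would induct on $k$. After fixing the initialization $\hat\alpha_n[0]$ (taking $\hat\alpha_n[0]=\beta_n[0]-C_n^1[0]-I_n^1[0]$, the only quantity available at the first reception), fact (ii) gives $\epsilon_n[0]\ge0$ directly. For the step, assume $\hat\alpha_n[k-1]\ge\alpha_n[k-1]$. Then the first argument of the $\min$ obeys $\hat\alpha_n[k-1]+T_n[k-1]\ge\alpha_n[k-1]+T_n[k-1]=\alpha_n[k]$ by the hypothesis and (i), while the second obeys $\beta_n[k]-C_n^1[k]-I_n^1[k]\ge\alpha_n[k]$ by (ii); hence $\hat\alpha_n[k]\ge\alpha_n[k]$, i.e. $\epsilon_n[k]\ge0$. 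Combined with the monotonicity above, this yields the full chain $\epsilon_n[0]\ge\epsilon_n[1]\ge\dotsb\ge0$.

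The only step requiring care — the main obstacle — is justifying (ii) under arbitrary contention with non-preemptive transmission. This follows from the causal structure of Section \ref{section:Problemsetup}: after the sampling instant $\alpha_n[k]$ the sensor node spends $I_n^1[k]$ preparing $\tau_n^1[k]$, the message then waits a (possibly zero) interval for the bus, and finally occupies the bus for its full duration $C_n^1[k]$; summing these non-negative intervals bounds $\beta_n[k]$ from below, with equality precisely in (\ref{nocon}). A minor bookkeeping point is to state the convention for $\hat\alpha_n[0]$ explicitly so the base case is well posed; any choice with $\hat\alpha_n[0]\ge\alpha_n[0]$ suffices.
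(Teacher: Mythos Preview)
Your proposal is correct and follows essentially the same approach as the paper: both use the bound $\beta_n[k]-C_n^1[k]-I_n^1[k]\ge\alpha_n[k]$ from the causal structure together with $\alpha_n[k]=\alpha_n[k-1]+T_n[k-1]$ to show by induction that each argument of the $\min$ in (\ref{equation:alphaestimation}) dominates $\alpha_n[k]$, and both derive monotonicity directly from $\hat\alpha_n[k]\le\hat\alpha_n[k-1]+T_n[k-1]$. The only cosmetic differences are that you treat the two halves in the opposite order and state the inductive step in full generality rather than illustrating it for $k=1$ before invoking induction.
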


\begin{proof}
First, we prove that the estimation error is non-negative, i.e. $\epsilon_n[k]\ge 0$ for any $k\ge 0$. When multiple message chains $\{\tau_1, \dotsb, \tau_N\}$ are  transmitted on the CAN, each message may not be transmitted immediately after it is ready. Instead, it has to compete with other messages for access to the CAN. Thus, we have that
\begin{equation} \label{equation:alpha1}
\alpha_n[k]+I_n^1[k]\leq \beta_n[k]-C_n^1[k] \;{\rm for \; any\;} k\ge0
\end{equation}
where the left hand side represents the time when a message $\tau_n^1[k]$ is ready for transmission, i.e. $\tau_n$ at Stage 2, and the right hand side represents the time when $\tau_n^1[k]$ actually starts to transmit on the CAN bus, i.e. $\tau_n$ at the beginning of Stage 3. According to Equation (\ref{equation:alphaestimation}) and (\ref{equation:alpha1}), we know that
\begin{equation}
\hat{\alpha}_n[0]=\beta_n[0]-C_n^1[0]-I_n^1[0]\ge \alpha_n[0]
\end{equation}
which implies $\epsilon_n[0]\ge0$. Moreover, we have that
\begin{equation} \label{equation:alpha2}
\hat{\alpha}_n[0]+T_n[0]\ge \alpha_n[0]+T_n[0]=\alpha_n[1].
\end{equation}
According to Equation (\ref{equation:alpha1}), we have that
\begin{equation}  \label{equation:alpha3}
\beta_n[1]-C_n^1[1]-I_n^1[1] \ge \alpha_n[1]
\end{equation}
Therefore, based on Equation (\ref{equation:alphaestimation}), (\ref{equation:alpha2}), and (\ref{equation:alpha3}), we have that
\begin{equation} \label{equation:c3}
\hat{\alpha}_n[1]\hspace{-0.5mm}=\hspace{-0.5mm}
{\rm min}\{\hat{\alpha}_n[0]\hspace{-0.9mm} + \hspace{-0.9mm}T_n[0],\,\beta_n[1]\hspace{-0.9mm}-\hspace{-0.9mm}C_n^1[1]\hspace{-0.9mm}-\hspace{-0.9mm}I_n^1[1]\}\ge \alpha_n[1]
\end{equation}
which implies that $\epsilon_n[1]\ge0$. By induction, we have  shown that $\epsilon_n[k]\ge 0$ for any $k\ge 0$.

Next, we show that the estimation error $\epsilon_n[k]$ is non-increasing as $k$ grows, i.e. $\epsilon_n[k]\ge \epsilon[k+1]$. According to Equation (\ref{equation:alphaestimation}), we have that
\begin{equation}
 \hat{\alpha}_n[k+1]\leq \hat{\alpha}_n[k]+T_n[k]
\end{equation}
which implies that
\begin{equation}
\hat{\alpha}_n[k+1]-\hat{\alpha}_n[k]\leq T_n[k]=\alpha_n[k+1]-\alpha_n[k]
\end{equation}
Hence, we have that
\begin{equation}
\hat{\alpha}_n[k]-\alpha_n[k]\ge\hat{\alpha}_n[k+1]-\alpha_n[k+1]
\end{equation}
Therefore, $\epsilon[k]\ge \epsilon[k+1]$ for any $k \ge 0$ is proved. \qed
\end{proof}

The claim implies that the estimation error for the state vector are all bounded and the error will never increase.  In fact, we have observed in our simulations that this error often decreases to zero. But there are cases where the error stays as a constant value.

Using the estimated states, we can also test for instantaneous schedulability by checking the condition $\hat r_n(t) \le \hat d_n (t)$ at the significant moments.  The following theorem holds.
\begin{theorem}
If a task is instantaneously schedulable e.g. $r_n(t) \le d_n(t)$, then the estimated states satisfies $\hat r_n(t) \le \hat d_n (t)$.
\end{theorem}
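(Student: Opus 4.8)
The plan is to treat the deadline and the residue separately and then combine them. For the deadline, I would note that both $\hat d_n(t)$ and $d_n(t)$ have the form ``period plus (estimated/true) release time minus $t$'': by (\ref{e5}), $\hat d_n(t)-d_n(t)=\hat\alpha_n[k]-\alpha_n[k]=\epsilon_n[k]$, where $k$ indexes the active instance, and this quantity is non-negative by Claim~\ref{claim:epsilon_converge}. Hence $\hat d_n(t)\ge d_n(t)$, and it suffices to show $\hat r_n(t)\le d_n(t)+\epsilon_n[k]$.

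For the residue I would run a case analysis keyed to the three branches of (\ref{equation:r_estimation}), i.e.\ on which of $\tau_n^1[k],\tau_n^2[k]$ have already been received at time $t$. (a) If both are received then $\hat r_n(t)=0$ and we are done, since $\hat d_n(t)>0$ (the active instance has not been superseded, so $t<\alpha_n[k+1]\le\hat\alpha_n[k]+T_n[k]$). (b) If only $\tau_n^1[k]$ is received, the chain is in Stage 4, 5 or 6; in Stages 4 and 5 the stage definitions give $\hat r_n(t)=r_n(t)$ directly, so $\hat r_n(t)=r_n(t)\le d_n(t)\le\hat d_n(t)$. (c) If neither is received, the chain is in Stage 1, 2 or 3; here I would rewrite the target inequality as $\hat r_n(t)+(t-\hat\alpha_n[k])\le T_n[k]$ and use the elementary fact that the amount of preparation-plus-transmission work completed by time $t$ never exceeds the elapsed time $t-\alpha_n[k]$. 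Combined with $r_n(t)\le d_n(t)$ this yields $\{I_n^1+C_n^1+I_n^2+C_n^2\}[k]\le T_n[k]$, and together with $\hat\alpha_n[k]\ge\alpha_n[k]$ (Claim~\ref{claim:epsilon_converge}) this closes Stages 1 and 2, as well as the subcase of Stage 3 in which the estimation error is large enough that $t-\hat\alpha_n[k]\le I_n^1[k]$.

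The crux, and the step I expect to be the real obstacle, is the case in which a sub-message of $\tau_n$ is physically in the middle of transmission at time $t$ (Stage 3 for $\tau_n^1[k]$, Stage 6 for $\tau_n^2[k]$): the estimator in (\ref{equation:r_estimation}) deliberately ignores transmission progress, keeping $\hat r_n(t)$ pinned at $C_n^1[k]+I_n^2[k]+C_n^2[k]$ (resp.\ $C_n^2[k]$) until the message is actually received, so $\hat r_n(t)$ overshoots the true residue by exactly the number of bits already on the wire. To handle this I plan to lean on the non-preemptive nature of CAN arbitration: the instant $t'$ at which the transmission began was itself a significant moment at which the true residue equalled the present value of $\hat r_n(t)$, and between $t'$ and the completion of that transmission no other chain can seize the bus, so deadline and true residue both run down at unit rate. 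Propagating the instantaneous-schedulability margin forward from $t'$ in this way, and again invoking $\epsilon_n[k]\ge0$, is what I would use to conclude $\hat r_n(t)\le\hat d_n(t)$; making this bookkeeping airtight — in particular quantifying the interplay between the transmission time already elapsed and the estimation error $\epsilon_n[k]$ — is where the bulk of the work lies.
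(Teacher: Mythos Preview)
Your skeleton --- establish $\hat d_n(t)-d_n(t)=\epsilon_n[k]\ge 0$ via Claim~\ref{claim:epsilon_converge}, then bound $\hat r_n(t)-r_n(t)$ above by $\epsilon_n[k]$ --- is exactly the paper's route. The paper, however, dispenses with the stage-by-stage case split: it simply writes $\hat r_n(t)-r_n(t)$ as the difference between the $\min$ appearing in~(\ref{equation:r_estimation}) and the same $\min$ with $\alpha_n[k]$ in place of $\hat\alpha_n[k]$, declares the second and third branches to contribute~$0$, and reads off that the first-branch difference is at most $\epsilon_n[k]$. Your cases~(a), (b) restricted to Stages~4--5, and (c) restricted to Stages~1--2 are thus more elaborate versions of what the paper does in one line.

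Where your analysis goes beyond the paper's is in isolating Stages~3 and~6 as the real obstacle --- and here there is a genuine gap that your non-preemption argument does not close. Writing $t'$ for the instant transmission began, one has $\hat r_n(t)=r_n(t')=r_n(t)+(t-t')$, so $\hat r_n(t)\le\hat d_n(t)$ reduces to $t-t'\le\bigl(d_n(t)-r_n(t)\bigr)+\epsilon_n[k]$, and nothing in the hypotheses bounds the elapsed on-wire time by the schedulability margin plus the estimation error. Concretely, a single chain with $(I_n^1,C_n^1,I_n^2,C_n^2,T_n)=(1,3,2,3,10)$, no contention, $\epsilon_n[k]=0$, and $t$ taken two time units into the transmission of $\tau_n^1$ gives $r_n(t)=6\le 7=d_n(t)$ yet $\hat r_n(t)=8>7=\hat d_n(t)$. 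The paper's proof sidesteps this by silently treating the true residue as if it obeyed the estimator's formula with $\alpha_n[k]$ in place of $\hat\alpha_n[k]$, a substitution that is valid only when $\tau_n$ is \emph{not} mid-transmission; so it does not address your crux either. The statement holds at the reception instants at which the observer is actually invoked (no chain is then in Stage~3 or~6), which is the paper's intended use, but your worry about arbitrary~$t$ is well founded and cannot be repaired along the lines you sketch.
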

\begin{proof}
	According to equation (\ref{e5}), we have
	\begin{align}
		\hat{d}_n(t)&=\hat{\alpha}_n[k]+T_n[k]-t  \cr
			&= \alpha_n[k] + \epsilon_n[k] +T_n[k]-t  \cr
			&=  	 d_n(t) + \epsilon_n[k] \cr
			&\ge r_n(t) + 	 \epsilon_n[k].
	\end{align}
	According to (\ref{equation:r_estimation}), we have
\begin{equation}
\begin{array}{c}
\hat r_n(t) - r_n(t)=\\
\left\{
\begin{array}{ll}
\hspace{-1mm}{\rm min}\{\,t\hspace{-1mm}-\hspace{-1mm}{\alpha}_n[k], I_n^1[k]\,\}\hspace{-1mm}-\hspace{-1mm}{\rm min}\{\,t\hspace{-1mm}-\hspace{-1mm}\hat{\alpha}_n[k], I_n^1[k]\,\} & \tau_n^1[k]\;{\rm and}\; \tau_n^2[k] \;{\rm NOT\;received}\vspace{1.5mm} \\
0 & {\rm otherwise}
\end{array}
\right.
\end{array}
\end{equation}
which implies that $\hat r_n(t) - r_n(t)=\hat \alpha_n(t) - \alpha_n(t)=\epsilon_n[k]$. Therefore
	$\hat r_n(t) \le \hat d_n (t)$. \qed
\end{proof}
The above theorem implies that if the message chains  are schedulable, then the estimated states will never fail the schedulability test. On the other hand, suppose we detect that a message chain is not schedulable using the estimated states, then the task set must not be schedulable.

\section{MPC Design} \label{section:MPC}
In this section, the MPC design problem proposed in Section \ref{section:Problem} will be solved. We assume that all the message chains are schedulable. Since each control loop is independent, the MPC design for any of the loops can be solved in the same way.

Let us consider the MPC design for the $n$th feedback loop corresponding to the message chain $\tau_n$. As discussed in Section \ref{section:ProblemChallenge},  the value  of $\delta_n[k]$ within the
prediction horizon is needed for MPC design. The message chain $\tau_n$ has $K$ instances that falls within the prediction horizon. Let the indices  of these instances starts from $k$ and ends at $k+K-1$ where $K\ge 1$ is an integer.
 Then we need to determine $\delta_n[k+j-1]$ for $j=1,2,...,K$.

\begin{theorem}
Suppose all messages are schedulable.   Consider the time instants
\begin{equation}
t_j = \CommentSZ{\hat}{\alpha}_n[k] + \sum_{l=1}^{j} T_n[k+l-1]
\end{equation}
for $j=1,2,...,K$. Then the delay $\delta_n[k+j-1]$ can be obtained from the states as
\beq  \label{equation:timingmodel2}
	\delta_n[k+j-1] = o_n(t_j^-).
\eeq
\end{theorem}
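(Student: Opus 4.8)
The plan is to recognize that the times $t_j$ are nothing but the arrival instants of the successive instances of $\tau_n$, so that $t_j^-$ is the moment just before instance $k+j$ arrives; then to use the schedulability hypothesis to guarantee that the active instance $\tau_n[k+j-1]$ has already been completed by $t_j$; and finally to read $o_n(t_j^-)$ directly off Definition \ref{definition:O} (or, equivalently, off the delay dynamics established in the proof of Theorem \ref{claim:H}).

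First I would unroll the recursion $\alpha_n[k+l]=\alpha_n[k+l-1]+T_n[k+l-1]$ from the third line of \eqref{nocon}. Telescoping from $l=1$ to $j$ gives $\alpha_n[k+j]=\alpha_n[k]+\sum_{l=1}^{j}T_n[k+l-1]=t_j$, and with the convention $t_0:=\alpha_n[k]$ this also yields $t_{j-1}=\alpha_n[k+j-1]$. Hence on the half-open interval $[t_{j-1},t_j)=[\alpha_n[k+j-1],\alpha_n[k+j])$ the active instance of $\tau_n$ is $\tau_n[k+j-1]$, so in particular $\tau_n(t_j^-)=\tau_n[k+j-1]$.

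Next I would invoke schedulability. Since every message chain is schedulable, $\tau_n$ is instantaneously schedulable at all times, i.e.\ $r_n(t)\le d_n(t)$. As $t\uparrow t_j=\alpha_n[k+j]$ the deadline $d_n(t)=\alpha_n[k+j]-t$ decreases to $0$, which forces $r_n(t_j^-)=0$; thus $\tau_n[k+j-1]$ is fully processed by $t_j$, and the instant at which it becomes fully processed — where it enters Stage~7 and $r_n$ first reaches $0$ — is exactly $\gamma_n[k+j-1]$, the completion time of its control sub-message. Because $r_n$ is non-increasing and equals $0$ only from Stage~7 on, we have $r_n>0$ on $[\alpha_n[k+j-1],\gamma_n[k+j-1])$ and $r_n=0$ on $[\gamma_n[k+j-1],t_j)$. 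Integrating the delay dynamics $\dot o_n={\rm sgn}(r_n)$ from the reset value $o_n(\alpha_n[k+j-1])=0$ then gives
\[
  o_n(t_j^-)=\int_{\alpha_n[k+j-1]}^{t_j}{\rm sgn}\big(r_n(s)\big)\,{\rm d}s=\gamma_n[k+j-1]-\alpha_n[k+j-1].
\]
Since $\delta_n[k+j-1]=\gamma_n[k+j-1]-\alpha_n[k+j-1]$ by the definition of the delay introduced with \eqref{equation:delay}, this is exactly \eqref{equation:timingmodel2}. One can also skip the integration and apply Definition \ref{definition:O} verbatim: at $t_j^-$ the active instance $\tau_n[k+j-1]$ has already been fully processed, so $o_n(t_j^-)$ is by definition the length of the interval from its start $\alpha_n[k+j-1]$ to its completion $\gamma_n[k+j-1]$.

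The argument is short, and the only place requiring care is the bookkeeping at the right endpoint: in the boundary case $\gamma_n[k+j-1]=t_j$ the function $o_n$ is continuous and increasing with slope $1$ up to $t_j$, so the left limit $o_n(t_j^-)$ is still $\gamma_n[k+j-1]-\alpha_n[k+j-1]$, while the jump of $o_n$ back to $0$ (caused by the arrival of instance $k+j$) occurs only at $t_j$ itself. The substantive facts are already in place: Theorem \ref{claim:H} guarantees that the trajectory of $o_n$ on $[\alpha_n[k+j-1],t_j)$ is the well-defined object we are integrating, and the assumption that the entire chain is schedulable prevents any earlier instance from spilling past its deadline, which is what makes $\tau_n[k+j-1]$ the active instance throughout $[\alpha_n[k+j-1],t_j)$ and forces its completion within that window.
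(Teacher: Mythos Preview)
Your argument is correct and follows essentially the same line as the paper's own proof: identify $t_j=\alpha_n[k+j]$, use schedulability to ensure the active instance $\tau_n[k+j-1]$ is completed before $t_j$, and then read off $o_n(t_j^-)=\gamma_n[k+j-1]-\alpha_n[k+j-1]=\delta_n[k+j-1]$ from Definition~\ref{definition:O}. The paper compresses this into a direct appeal to the definition of $o_n$, while you supply the telescoping that makes $t_j=\alpha_n[k+j]$ explicit, the $r_n(t_j^-)=0$ step, and the boundary-case bookkeeping, but the underlying idea is identical.
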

\begin{proof}
By definition of the state variable $o_n(t)$, it represents the time delay between the starting time of the active instance of a message chain and the time $t$. If we let $t=t_j^-$, then $o_n(t_j^-)$ is the delay between
the starting time of the active instance $\tau(t_j)$ and $t_j$. Since all message chains are schedulable, the active instance $\tau(t_j)$  would have been processed before $t_j$. Then the  delay $o_n(t_j^-)$ is the delay between the starting time and the finishing time of the active instance e.g. $\delta_n[k+j-1] = o_n(t_j^-)$. \qed
\end{proof}

Let the current time be $t$, suppose we have  estimated the state vector $\hat{Z}(t)$ by the state observer introduced in Section \ref{section:StateObserver}. Then, we will be able to predict the future trajectory of $\hat{Z}(t+s)$ for all $s\in [0, T_p]$ where $T_p$ is the length of the prediction horizon:
\begin{equation} \label{equation:timingmodel1}
\begin{array}{c}
[\hat{D}(t+s), \hat{R}(t+s), \hat{O}(t+s), I\hspace{-0.5mm}D(t+s)]=\vspace{1.5mm}\\
\mathbb{H}(\hspace{0.3mm}[\hat{D}(t), \hat{R}(t), \hat{O}(t), I\hspace{-0.5mm}D(t)], \{T_n, I_n^1, C_n^1,I_n^2,C_n^2,P_n\}_{n=1}^{N}(t+s'))
\end{array}
\end{equation}
where $0\le s' \le s$.
 Using the hybrid timing model $\mathbb{H}$, let $t+s=t_j$ for $j=1,2,...,K-1$. we can perform online prediction of the delay as $\delta_n[k+j-1]=\hat o_n(t_j)$ according to  equation (\ref{equation:timingmodel2}).
 Due to the fact that $\hat \alpha_n[k] \ge \alpha_n[k]$, the delay based on the estimate $\hat o_n$ may be smaller than the actual delay.

With the delay $\delta_n[k+j-1]$ determined for $j=1,2,...,K$ all determined, the MPC design problem (\ref{equation:finaloptimization}) subject to the constraints
(\ref{equation:finaloptimization}.a)-(\ref{equation:finaloptimization}.c) is now well formulated. The solution of the continuous time MPC problem can be obtained using well-known optimization techniques as in \cite{Wang2009}. The resulting piecewise linear control effort is then applied to the plant until the next time the controller is triggered.  The timing model will be engaged again to predict the delays, and then a new piecewise control law will be computed by solving the MPC design problem. This process will be iterated. The prediction of the delay requires little computing time for the following reasons: (1) the timing model is very simple with linear complexity; (2) the calculation is only performed at the significant moment because the transition between any two consecutive moments is continuous and follows the equations in the timing model. Hence, the timing model is compatible with the MPC design approach.

\section{Numeric Simulation} \label{section:simulation}
In this section, we use numeric simulations to demonstrate  the MPC design using the hybrid timing model of the CAN. We show that the timing model is preferred even when there exist other simulation tools to generate the timing sequences for the message chains.

The simulation environment for the CAN-based control system is established according to Figure \ref{fig:network}. To compare with our approach,
the CAN in Figure \ref{fig:network} is simulated using Truetime (Version 2.0) \cite{Cervin2003}. Truetime is a Matlab/Simulink-based simulator for real-time control system, which provides a network block that supports the protocol of the CAN.  The Truetime simulation results are used as the ground truth for the timing of message chains.

Our simulation contains  three feedback control loops sharing the CAN. The plant in each feedback loop is an inverted pendulum model represented as follows
\begin{align} \label{equation:pendulum}
\dot{x}_n(t)&=\left[
\begin{array}{cc}
0 & 1\\
a_n & b_n
\end{array}
\right]
x_n(t)+\left[
\begin{array}{c}
0\\
c_n
\end{array}
\right]u_n(t) \cr
y_n(t)&=\left[\begin{array}{cc}
1 & 0
\end{array}\right]x_n(t) \cr
\end{align}
The inverted pendulums in the three feedback control loops have different coefficients as $[a_1, a_2, a_3]=[98, 65, 44]$, $[b_1, b_2, b_3]=[120, 52, 30]$ and $[c_1, c_2, c_3]=[20, 13, 10]$. The sensor nodes sample the state of the plants at the time interval of $20\,$ms, $30\,$ms, and $40\,$ms. Each sensor node needs $1\,$ms to process the sampling information and generate a sensor message. The MPC controller node in each feedback control loop computes an optimal control signal $u_n(t)$ that makes the plant output $y_n(t)$ track a given reference trajectory $\gamma_n(t)$ as close as possible, under the constraint that $-4 \leq u_n(t) \leq 4$. The computation time of an MPC is $2\,$ms. The actuator node takes action as soon as the control message is received from the CAN bus. We assume that sensor and control messages have the transmission duration of $3\,$ms and they are assigned unique identifier fields such that the priorities of the message chains satisfy $P_1^{1}[k]<P_1^{2}[k]<P_2^{1}[k]<P_2^{2}[k]<P_3^{1}[k]<P_3^{2}[k]$ which implies that the first feedback loop has the highest priority and the third loop has the lowest priority.  Hence the three message chains  transmitted on the CAN  have the following characteristics
\begin{align} \label{equation:nominalcharacteristics}
\left[\hspace{0mm}T_1(t), I_1^1(t), \hspace{0mm}C_1^1(t), \hspace{0mm}I_1^2(t),\hspace{0mm}C_1^2(t)  \hspace{0mm}\right]\hspace{-0.6mm}=\hspace{-0.6mm}[20, 1, 3, 2, 3]\,{\rm ms} \cr
\left[\hspace{0mm}T_2(t), I_2^1(t), \hspace{0mm}C_2^1(t), \hspace{0mm}I_2^2(t),\hspace{0mm}C_2^2(t)  \hspace{0mm}\right]\hspace{-0.6mm}=\hspace{-0.6mm}[30, 1, 3, 2, 3]\,{\rm ms} \cr
\left[\hspace{0mm}T_3(t), I_3^1(t), \hspace{0mm}C_3^1(t), \hspace{0mm}I_3^2(t),\hspace{0mm}C_3^2(t)  \hspace{0mm}\right]\hspace{-0.6mm}=\hspace{-0.6mm}[40, 1, 3, 2, 3]\,{\rm ms}
\end{align}

\subsection{Verification of Hybrid Timing Model}
We  first  verify the correctness of our proposed timing model by  comparing the delays predicted through the hybrid timing model with the delay observed from the simulation results generated from Truetime.
Suppose the message chains in Equation (\ref{equation:nominalcharacteristics}) are being transmitted on the CAN. Figure \ref{fig:Truetime} shows the timing of message chains generated by the Truetime simulation. Table \ref{fig:hybriddelay} shows the delays $\delta_n[k]$ predicted through the hybrid timing model in Equation (\ref{equation:timingmodel1}) and (\ref{equation:timingmodel2}). In Figure \ref{fig:Truetime}, the value ``0.5". indicates that the message is ready for transmission but blocked by other messages on the CAN bus, the value ``1" indicates that the message is being transmitted on the CAN bus, and the value ``0" indicates that the message finishes transmission.

\begin{table}
\centering
\begin{tabular}{ |c|c|c|c|c|c|}
\hline
$\delta_n[k]$ & k=1 & k=2 & k=3 & k=4\\ \hline
n=1 & 10\,ms & 9\,ms & 10\,ms &10\,ms  \\ \hline
n=2 & 13\,ms & 9\,ms & 13\,ms & 11\,ms \\ \hline
n=3 & 21\,ms & 13\,ms & 13\,ms & 21\,ms  \\
\hline
\end{tabular}
\caption{Delays predicted through the hybrid timing model}
\label{fig:hybriddelay}
\end{table}

\begin{figure}[tb]
\hspace{-1mm}
\centering
\includegraphics[width=0.49\textwidth]{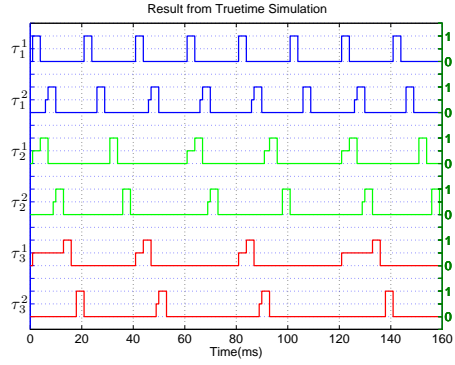}
\caption{Timing of message chains produced by Truetime simulation} \label{fig:Truetime}
\end{figure}

For illustration, we examine the delay $\delta_3[k]$ in the third feedback control loop. The delays in other feedback control loops can be studied using the exactly same procedure. We know that $\delta_3[k]$ is a time interval between the moment when the sensor take measurements and the moment when the actuator take actions.  The sensor in the third feedback control loops take measurements at $0\,$ms, $40\,$ms, $80\,$ms, and $120\,$ms.  By closely examining Figure \ref{fig:Truetime}, we observe that the control message $\tau_3^2$ in the third feedback control loop  finishes transmission at $21\,$ms, $53\,$ms, $92\,$ms, and $141\,$ms. Therefore, the observation of Figure \ref{fig:Truetime} shows that the value of $\delta_3[k]$ is $21\,$ms, $13\,$ms, $12\,$ms, and $21\,$ms, for $1\leq k\leq 4$. This observation exactly matches the value of $\delta_3[k]$ listed in Table \ref{fig:hybriddelay}. Similarly, we can see that the values of $\delta_1[k]$ and $\delta_2[k]$ observed from Figure \ref{fig:Truetime} also match that listed in Table \ref{fig:hybriddelay}. Therefore, we can claim that the hybrid timing model can accurately describe the timing of message chains on the CAN.

\subsection{Analysis of Computational Cost}
Even though Truetime and other event-based simulation tools  are able to generate the timing sequences of the message chains, running such simulation takes significant amount of computation resources. Hence these simulations may be too slow for realtime embedded applications. Our timing model is discontinuous at limit number of time points, but continuous the rest of time. So, running our model  only requires significant computation at a small fraction of discrete time points, and the system transition between any two consecutive discrete time points can be directly derived using mathematical equations. This has caused a significant reduction of computing load when compared to typical simulation based methods. To verify
this computational advantage, we evaluate the computational time of generating scheduled behavior in Figure \ref{fig:Truetime} using both the hybrid timing model and Truetime. The experiment is performed on a MacBook computer with Processor 2.26 GHz Intel Core 2 Duo, and Memory 4GB 1067MHz DDR3. Since Truetime is written in C++ Mex, we also implement the analytical timing model in the same way as Truetime. Matlab version 2010Rb and the Trutime Version 2.0 are used for the comparison. For each simulation window length that falls within [0, 100]s, we run both methods 50 times and then calculate the averaged computation time for each method. Fig \ref{fig:ComputationalCost} shows the comparison. The horizontal axis denotes the window length used for all simulated scheduled behaviors, and the vertical axis denotes the time spent to compute the simulation. In both figures, the computational time linearly increases with window length. More importantly, we can see that the hybrid timing model is approximately 4000 times faster than Truetime, which is a significant improvement for embedded system applications.

\begin{figure}[htp]
\subfigure[Hybrid Timing Model]
{\hspace{-1mm}\includegraphics[height=44mm, width=60mm]{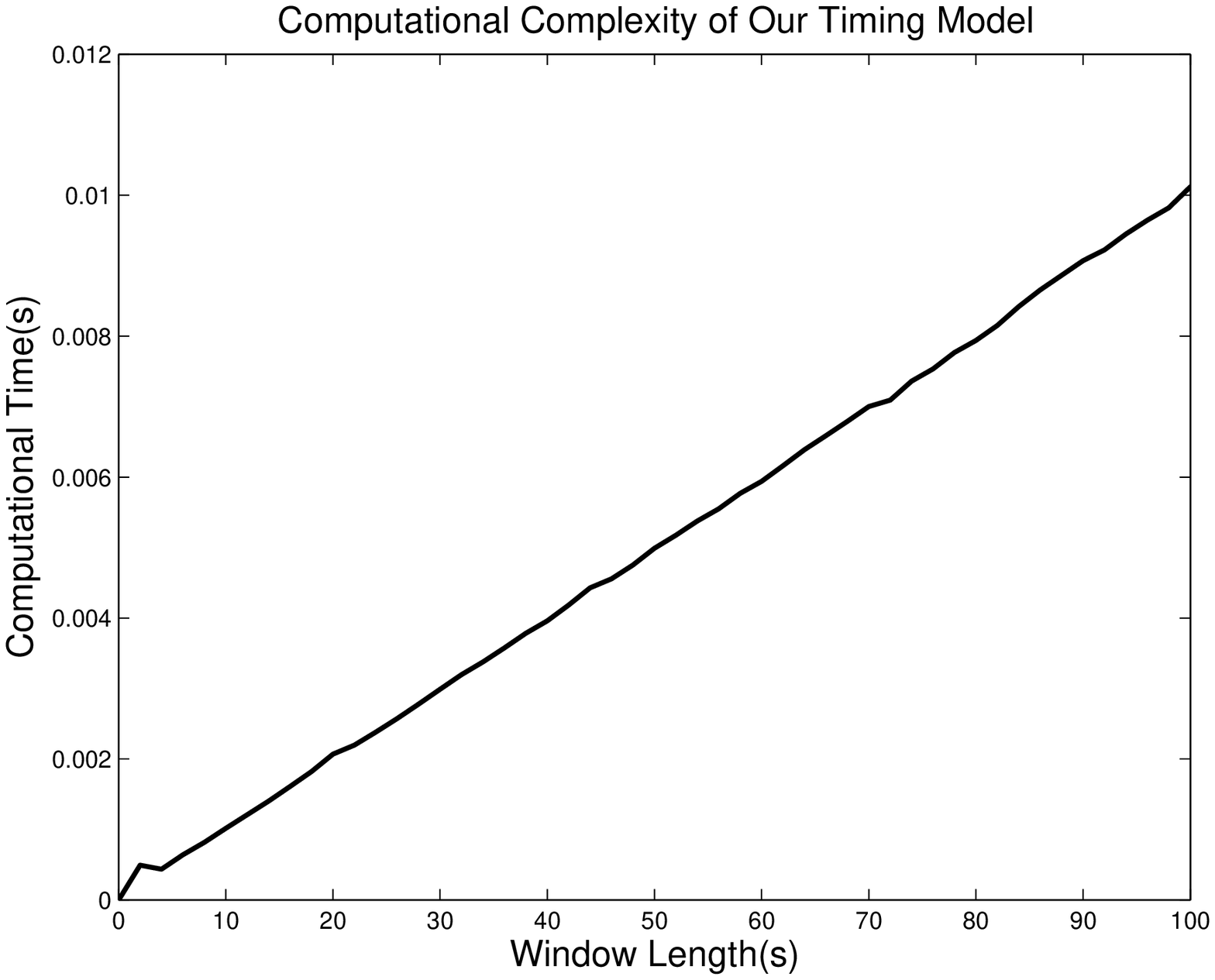}
  \label{fig:model_complexity}
 }
\subfigure[Truetime Simulation]
{\hspace{-1.5mm}\includegraphics[height=44mm, width=60mm]{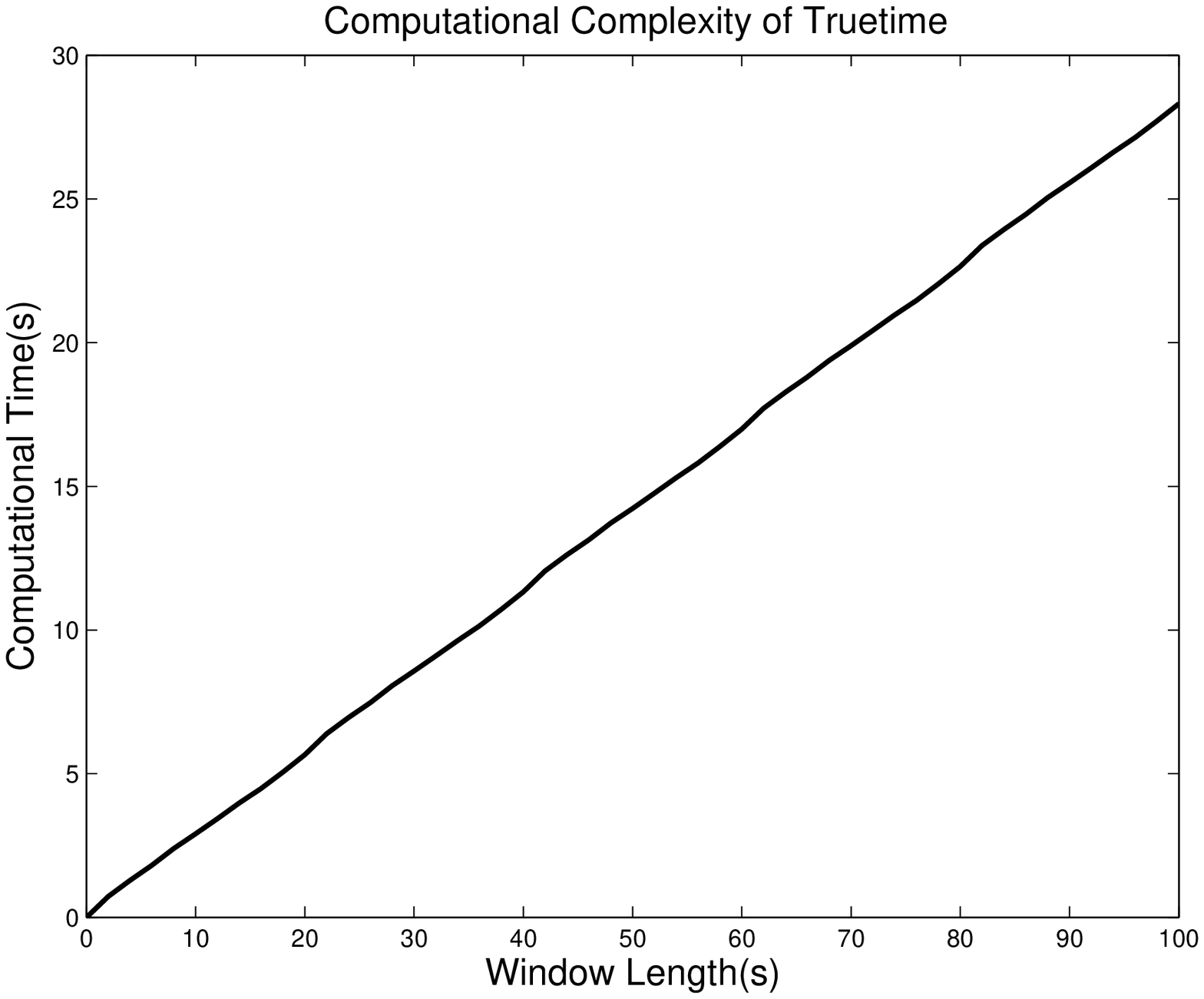}
  \label{fig:truetime_complexity}
 }
 \caption{Comparing the time needed to simulating scheduled behaviors.} \label{fig:ComputationalCost}
\end{figure}

\subsection{MPC Performance}
We demonstrate the performance of the MPC using the hybrid timing model for a CAN-based control system operating in dynamic, uncertain environment.  Suppose the messages on the CAN are changed at runtime. We consider two types of messages adjustments on the CAN within the time interval $[1, 1.5]$s. One is the adjustment of the message period as
\begin{equation}
\left[\,\hspace{0.7mm}T_1(t), \hspace{0.8mm}T_2(t), \hspace{0.8mm}T_3(t)\hspace{0.3mm}\,\right]=[20, 40, 50]\;{\rm ms}
\end{equation}
The other type of adjustments is the activation of two sporadic messages on the CAN, which have the following characteristics
\begin{align}
\left[\hspace{0mm}T_4(t), I_4^1(t), \hspace{0mm}C_4^1(t), \hspace{0mm}I_4^2(t),\hspace{0mm}C_4^2(t)  \hspace{0mm}\right]\hspace{-0.6mm}=\hspace{-0.6mm}[40, 0.2, 1, 0, 0]\,{\rm ms}\cr
\left[\hspace{0mm}T_5(t), I_5^1(t), \hspace{0mm}C_5^1(t), \hspace{0mm}I_5^2(t),\hspace{0mm}C_5^2(t)  \hspace{0mm}\right]\hspace{-0.6mm}=\hspace{-0.6mm}[60, 0.2, 1, 0, 0]\,{\rm ms}
\end{align}
The sporadic messages are assigned unique identifier field such that $P_5[k]<P_4[k]<P_1^{1}[k]$.  Note that since these adjustments happen at runtime, their characteristics are not available at the off-line
design stage. It is then expected that the timing of the message chains will be disturbed and the controller performance will be affected.

We compare two different approaches of designing MPC for the CAN-based control system. The two approaches differ in their way of predicting $\delta_n[k]$. In the first approach, the delay $\delta_n[k]$ is predicted off-line through the worst-case analysis discussed in \cite{Tindell,Tindell1995, CanRevisit}. In the second approach, the delay $\delta_n[k]$ is predicted online through the hybrid timing model. Figure 6 shows the MPC performance of three feedback control loops under the above two different approaches. The solid line represents the plant output $y_n(t)$ and the dashed line represents the reference trajectory $\gamma_n(t)$. The left plots are results of the first approach that uses the worst-case response time and the right plots are results of the second approach that uses the hybrid timing model. It is obvious that the second approach (right plots) gives better performance than the first approach (left plots). This is because in the second approach, delays are predicted online using the hybrid timing model of the CAN, which can accurately predict delay and dynamically compensate for the delay.

Also, it is worth mentioning that even in the first approach(left plots), MPC performance in the first feedback control loop is better than the other two loops. This is because the messages in the first feedback control loop are assigned the highest priorities among all messages on the CAN. Therefore, the difference between the actual delay and the worst-case response time is small in the first feedback control loop . Using even the worst-case response time for MPC design can still give out the acceptable performance for the first feedback control loop. However, such difference in the second and third feedback control loop will increases, which leads to the degraded MPC performance.

\begin{figure}[tp]
  \label{fig:twoapproaches}
\subfigure[The first feedback control loop]
{\hspace{10mm}\includegraphics[trim = 15mm 0mm 8mm 7mm, clip,height=30mm, width=43mm]{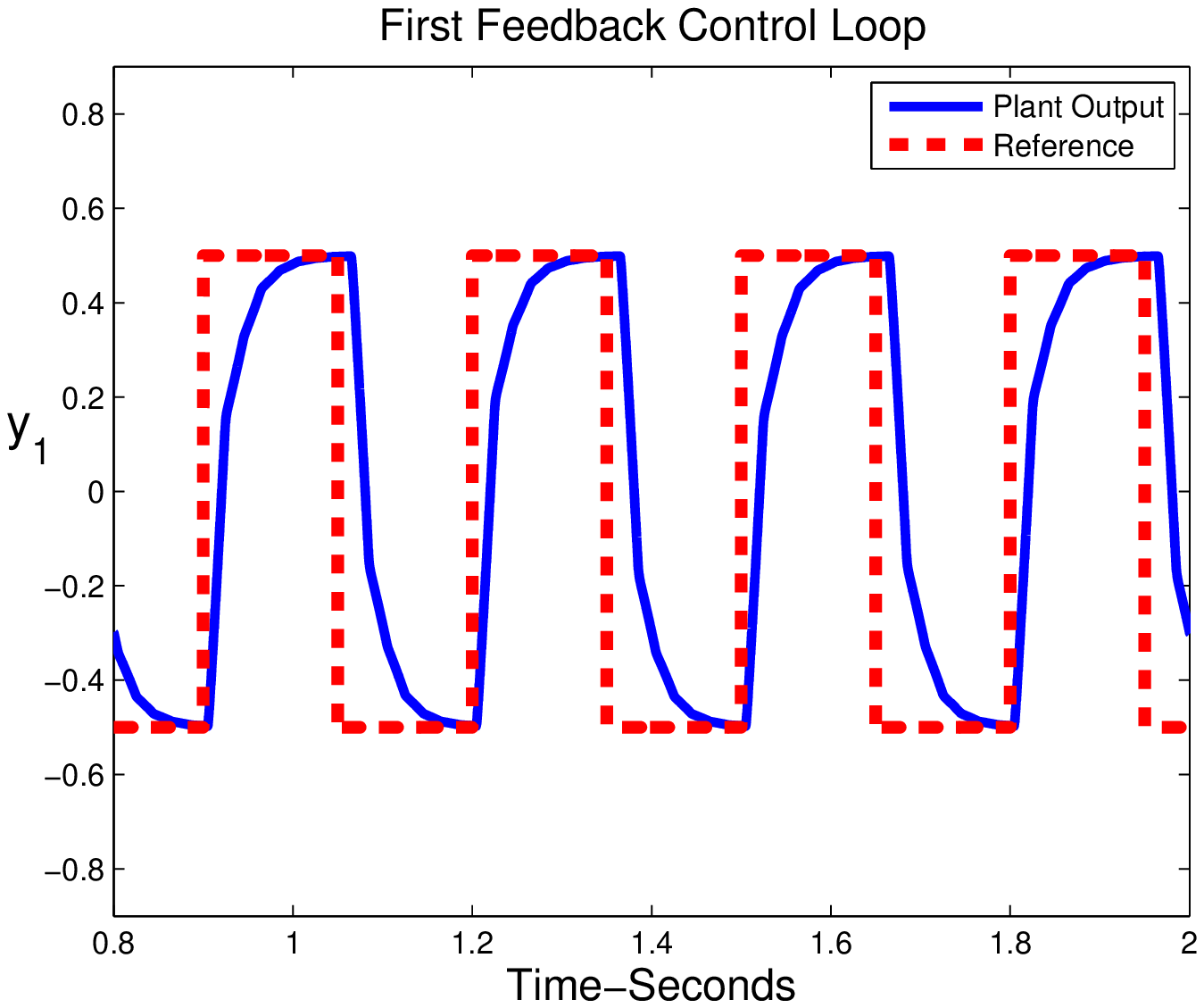}
\hspace{6mm}\includegraphics[trim = 15mm 0mm 8mm 7mm, clip,height=30mm, width=43mm]{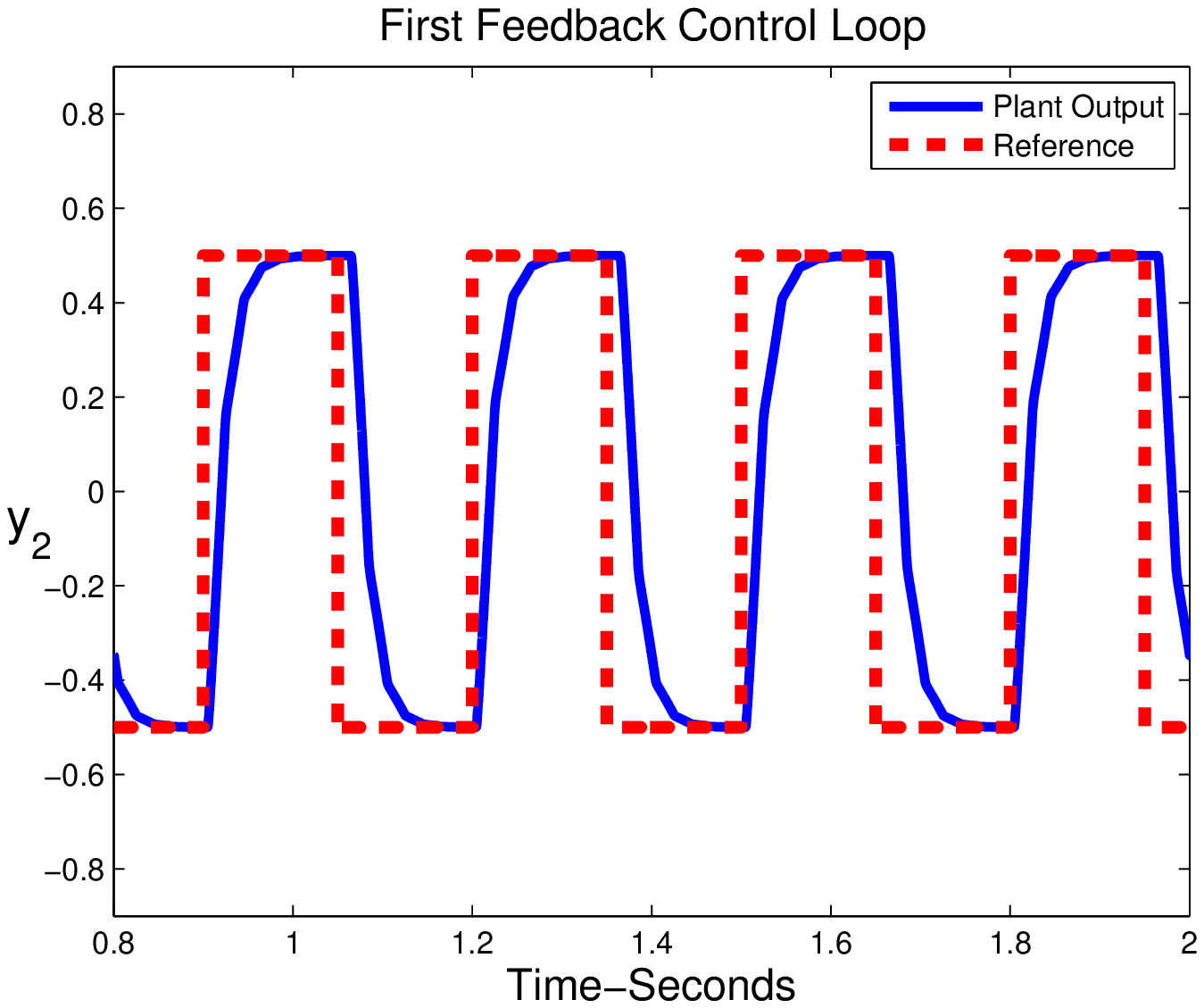}
\vspace{5mm}
 }
\subfigure[The second feedback control loop]
{\hspace{10mm}\includegraphics[trim = 15mm 0mm 8mm 7mm, clip,height=30mm, width=43mm]{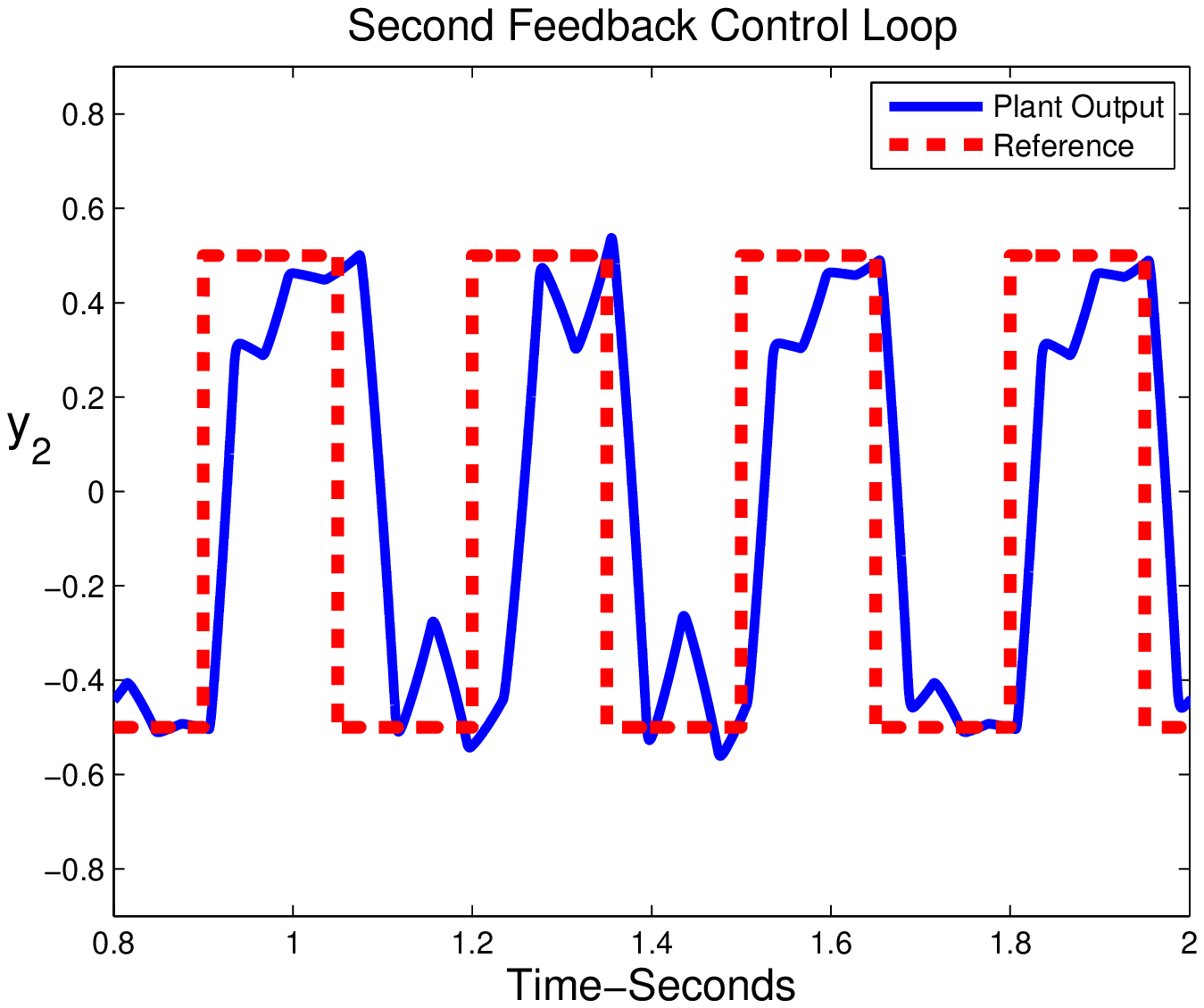}
\hspace{6mm}\includegraphics[trim = 15mm 0mm 8mm 7mm, clip,height=30mm, width=43mm]{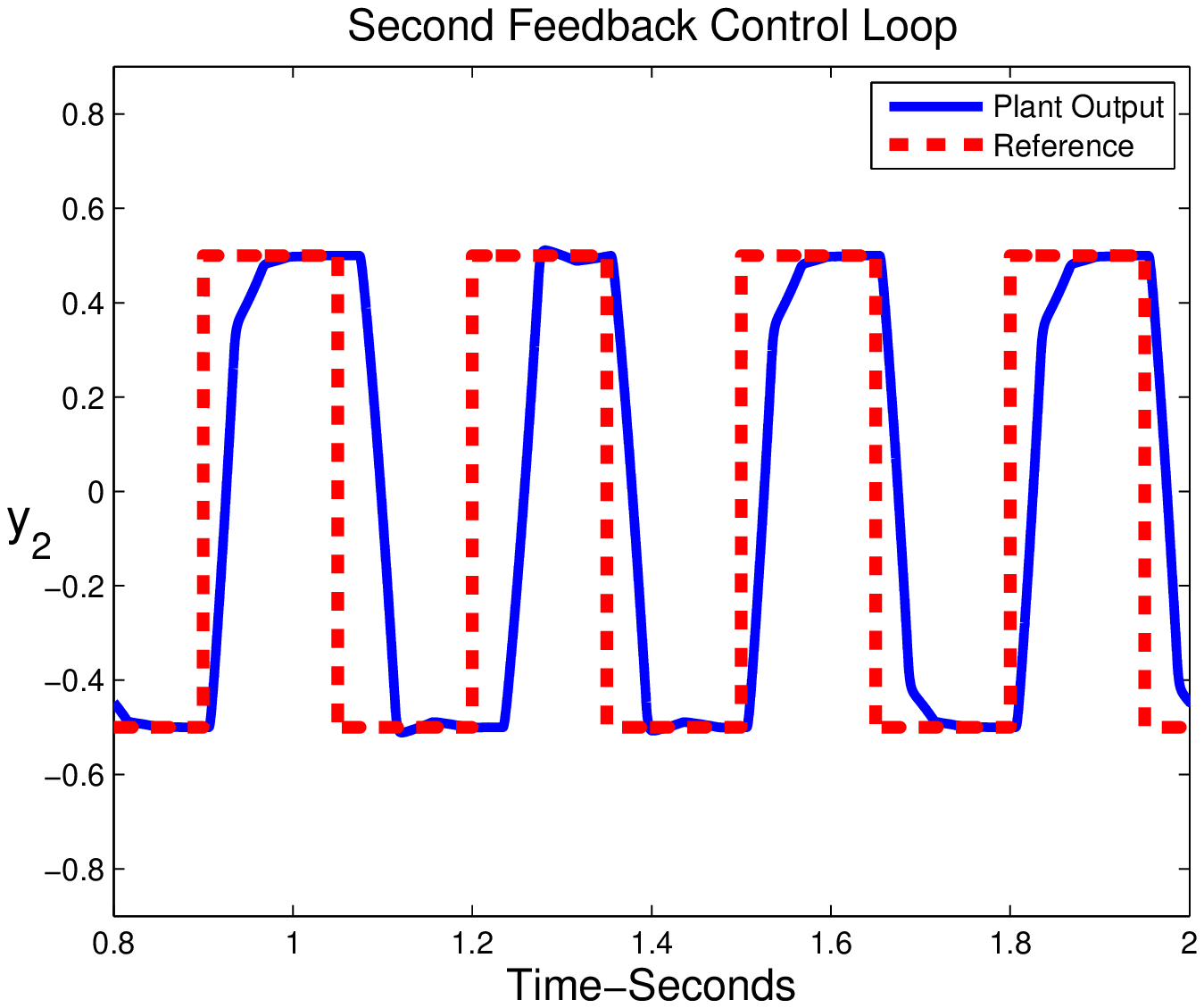}
\vspace{5mm}
 }
 \subfigure[The third feedback control loop]
{\hspace{10mm}\includegraphics[trim = 15mm 0mm 8mm 7mm, clip,height=30mm, width=43mm]{y3.eps}
\hspace{6mm}\includegraphics[trim = 15mm 0mm 8mm 7mm, clip,height=30mm, width=43mm]{y3_online.eps}
\vspace{5mm}
 }
 \caption{MPC performance of three feedback control loops, under two design approaches} \label{fig:MPCperformance}
\end{figure}

\section{Conclusion and Future Work}
 The main contribution of this paper is a hybrid timing model for messages scheduled on the CAN. We have shown that such timing model enables a model predictive control approach on the CAN.
 It also provides convenient ways to check for schedulability of messages. This model may be used for co-design of scheduling and MPC for real-time embedded systems on the CAN. Moreover, the timing model is a generic mathematical model that can be extended to many applications\cite{Wang2013,Wang2015,WangCPS2015, Shi2016}. Our simulations show that using the hybrid timing model for MPC can achieve improved performance than using worst case timing. Our future work will extend this hybrid timing model to other real-time communication networks that use message priorities for arbitration,  for example, the dynamic segment of FlexRay \cite{Pop2008}.



\end{document}